\newcommand{\bm}[1]{\mbox{\boldmath{$#1$}}}
\begin{document}

\title{Clustering Based Hybrid Precoding Design for Multi-User Massive MIMO Systems}

\author{\IEEEauthorblockN{Ling~Zhang, Lin~Gui, \emph{Member,~IEEE}, Kai~Ying, Qibo~Qin}



}



\maketitle

\begin{abstract}
Hybrid precoding has been recognized as a promising technology to combat the path loss of millimeter wave signals in massive multiple-input multiple-output (MIMO) systems. However, due to the joint optimization of the digital and analog precoding matrices as well as extra constraints for the analog part, the hybrid precoding design is still a tough issue in current research. In this paper, we adopt the thought of clustering in unsupervised learning and provide design schemes for fully-connected hybrid precoding (FHP) and adaptively-connected hybrid precoding (AHP) in multi-user massive MIMO systems. For FHP, we propose the hierarchical-agglomerative-clustering-based (HAC-based) scheme to explore the relevance among RF chains in optimal hybrid procoding design. The similar RF chains are merged into an individual RF chain when insufficient RF chains are equipped at the base station (BS). For AHP, we propose the modified-K-means-based (MKM-based) scheme to explore the relevance among antennas at the BS. The similar antennas are supported by the same RF chain to make full use of the flexible connection in AHP. Particularly, in proposed MKM-based AHP design, the clustering centers are updated by alternating-optimum-based (AO-based) scheme with a special initialization method, which is capable to individually provide feasible sub-connected hybrid precoding (SHP) design. Simulation results highlight the superior spectrum efficiency of proposed HAC-based FHP scheme, and the high power efficiency of proposed MKM-based AHP scheme. Moreover, all the proposed schemes are clarified to effectively handle the inter-user interference and outperform the existing work.

\end{abstract}

\begin{IEEEkeywords}
Millimeter wave, multi-user massive MIMO systems, clustering, hybrid precoding.
\end{IEEEkeywords}

\IEEEpeerreviewmaketitle

\section{Introduction}

\IEEEPARstart{W}{ith} the popularity of intelligent terminals, mobile data traffic is facing exponential growth. To meet the potential capacity requirements for future wireless communication, various novel wireless techniques such as massive multiple-input-multiple-output (MIMO), advanced channel coding, and non-orthogonal multiple access have enthused much attention \cite{MIMO,POLAR,NOMA}. Nevertheless, the bandwidth shortage in physical layer leads to the fundamental bottleneck for capacity improvement \cite{Bandshort}. Thus, it is imminent to develop spectrum bands which have not been utilized in current cellular systems.

Millimeter wave (mmWave) band spanning from 30 to 300 GHz has been determined as the alternative band to expand the available bandwidth in 5G systems \cite{3GPPFR1}. Benefiting from the short wavelength of mmWave, it is feasible to deploy large-scale antennas in limited space at transceivers to implement massive MIMO systems. However, due to the extremely high carrier frequency, mmWave signals experience more serious propagation path loss compared with signals in 3G or LTE. It is necessary to use precoding technology to achieve highly directional beamforming \cite{Pathloss,Necessary}.

In traditional MIMO systems, full digital precoding is the typical scheme to adjust the amplitudes and  phases of the transmit signals \cite{FD1,FD2}. For point-to-point systems, the optimal full digital precoder is directly determined by the singular value decomposition (SVD) of the channel matrix \cite{Sparse}. As for multi-user systems, there are three efficient precoding schemes, including matched-filter (MF), zero-forcing (ZF), and regularized zero-forcing (RZF) methods, to manage the interference among users \cite{MF,ZFR}. However, in the full digital structure, the number of radio frequency (RF) chains is equivalent to that of antennas, which results in prohibitive hardware and power consumption for massive MIMO systems. To tackle this problem, the analog-only procoding scheme is proposed in \cite{AP1,AP2,AP3}. In analog structure, only analog phase shifters (APSs) are utilized to control the phases of the transmit signals, which costs much less than the full digital structure and has been adopted in commercial indoor mmWave communication standards like IEEE 802.11ad \cite{stad}. However, the APSs impose constant modulus constraint on the entries of the precoding matrix, which leads to a less degree of signal freedom and poorer precoding performance compared the full digital precoding \cite{BADAP}.

As a promising precoding scheme for massive MIMO systems, the hybrid precoding architecture has been widely investigated to provide a tradeoff between consumption and performance \cite{HPSERVY}. The hybrid precoding architecture combines the digital precoder in the baseband and the analog precoder in the RF domain. Benefited from the low-dimensional digital precoder, fewer RF chains are required for implementation.

\subsection{Related Works}
Recent research on hybrid precoding focuses on the fully-connected \cite{Sparse,ARVcs,DFT1,DFT2,optfhp,fuldecp1,fuldecp2,fuladd,fuladd2} and the sub-connected structures \cite{subpso,subsnr,subcodbk1,subcodbk2,subcodbk3,subdecp,subml1,subml2}, which can be distinguished by the connection state between RF chains and antennas as illustrated in Fig. 1(a) and Fig. 1(b).

In the fully-connected structure, each antenna is supported by all RF chains through APSs and RF adders. Considering the single-user scenario,
the precoding design is formulated as a sparse reconstruction problem to minimize the Euclidean distance between the optimal full digital precoding matrix and the hybrid precoding matrix \cite{Sparse}. In particular, the array response vectors are spanned to generate the codebook of analog precoding matrix \cite{Sparse,ARVcs}. Based on the similar thought, the columns of analog precoding matrix are selected from the discrete Fourier transform matrix in \cite{DFT1,DFT2}. With the limitation of the codebook, the analog precoding design suffers a low degree of freedom. To approach the performance of the full digital precoder, the authors in \cite{optfhp} prove that it is sufficient for the number RF chains to be twice the number of data streams, and provide the closed-form expressions for the precoding design. When the RF chains is not enough, the decoupling design scheme is proposed in \cite{fuldecp1,fuldecp2}, where the analog precoder is first designed to harvest the large array gain, and the digital precoder is further obtained to manage the inter-user interference. Undoubtedly, the decoupling process shall lead to performance loss. Thus, how to improve fully-connected hybrid precoding (FHP) scheme with insufficient RF chains and achieve close performance to full digital precoding is an urgent problem to be solved.

In the sub-connected structure, each RF chain is connected to a specific subset of antennas. Since there is no overlap among antenna subsets, no RF adders are required. For single-user systems, the principle of manifold optimization and particle swarm optimization are respectively considered to develop two algorithms with different complexity in \cite{subpso}. The authors in \cite{subsnr} discuss the analog precoding design for high and low SNR conditions, respectively. The original multi-stream transmission problem is decomposed into several single-stream transmission problems with per-antenna power constraint. For multi-user systems, the codebook-based and the decoupling-based schemes can still be operated \cite{subcodbk1,subcodbk2,subcodbk3,subdecp}. Additionally, machine learning is proposed as a novel approach for sub-connected hybrid precoding (SHP) design \cite{subml1,subml2}. The authors in \cite{subml1} reformulate the beam selection problem for uplink precoding as a multiclass-classification problem which can be solved by the support vector machine algorithm. In \cite{subml2}, the analog precoder is realized by several switches and inverters. And an adaptive cross-entropy-based scheme is developed for the new architecture. Essentially, the new structure can be equivalently realized by one-bit quantized APSs. Consequently, it is common for recent study to impose extra constraints for sub-connected structure, such as codebook-based analog precoder and APSs with few quantization bits, which limits the freedom of design.

The numerical results in \cite{subdecp} illustrate that the fully-connected structure can provide better precoding performance than the sub-connected structure. Nevertheless, the implementation of the fully-connected structure in massive MIMO systems requires high hardware consumption due to the large demand of APSs and RF adders. To provide a tradeoff between two structures, the adaptively-connected hybrid precoding (AHP) scheme is adopted in \cite{AHPsu,AHPmu,AHPmu2,MYpimrc}. As shown in Fig. 1(c), the adaptively-connected structure is a kind of generalization for the sub-connected structure. The adaptive connection network provides flexible connection between RF chains and antennas, which means better precoding performance can be achieved with the similar hardware consumption as the sub-connected structure. Considering single-user scenarios, the authors in \cite{AHPsu} propose the connecting scheme based on maximizing the sum of the largest singular values of several subchannel matrices. For multi-user scenarios, the decoupling-based schemes are further revised in \cite{AHPmu,AHPmu2}, where the analog preconding matrix is designed to improve the users' average achievable rate with the thought of greed. So far, less research efforts have been invested in AHP. Especially for multi-user scenarios, the decoupling-based schemes ignore the relationship between analog and digital precoders and make less use of the flexibility in the adaptive connection network, which results in poor performance. Moreover, only a portion of the RF chains are effectively utilized in \cite{AHPmu,AHPmu2}, the number of which is equal to that of users.

\begin{figure*}[!t]
\captionsetup{belowskip=-10pt}
\centering
\subfigure[]{
\includegraphics[scale=0.5]{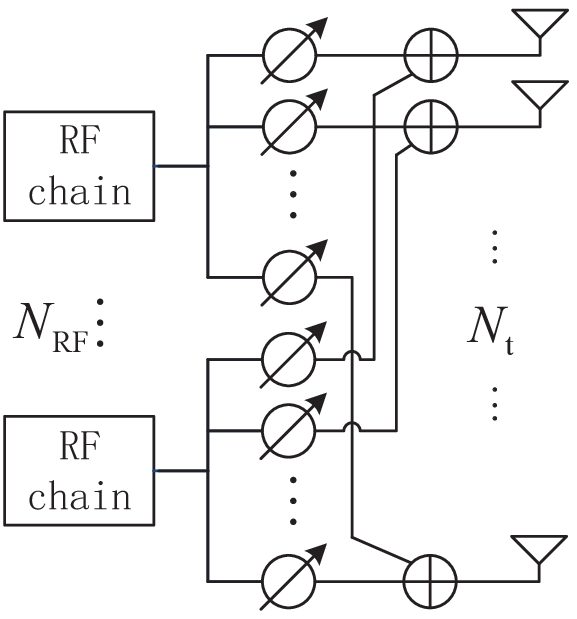}}
\subfigure[]{
\includegraphics[scale=0.5]{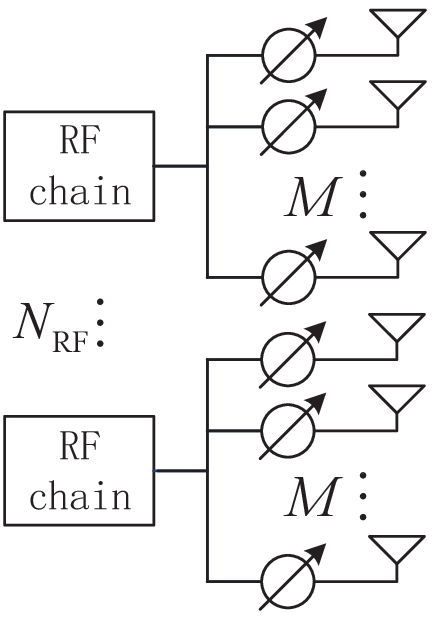}}
\subfigure[]{
\includegraphics[scale=0.5]{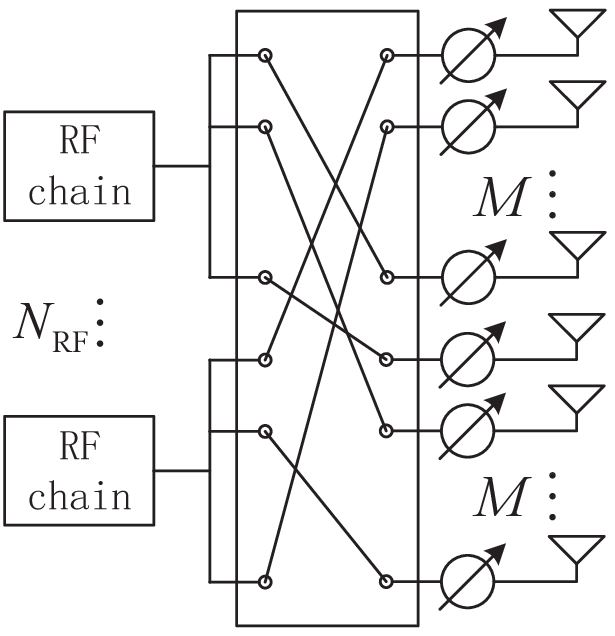}}
\subfigure[]{
\includegraphics[scale=0.5]{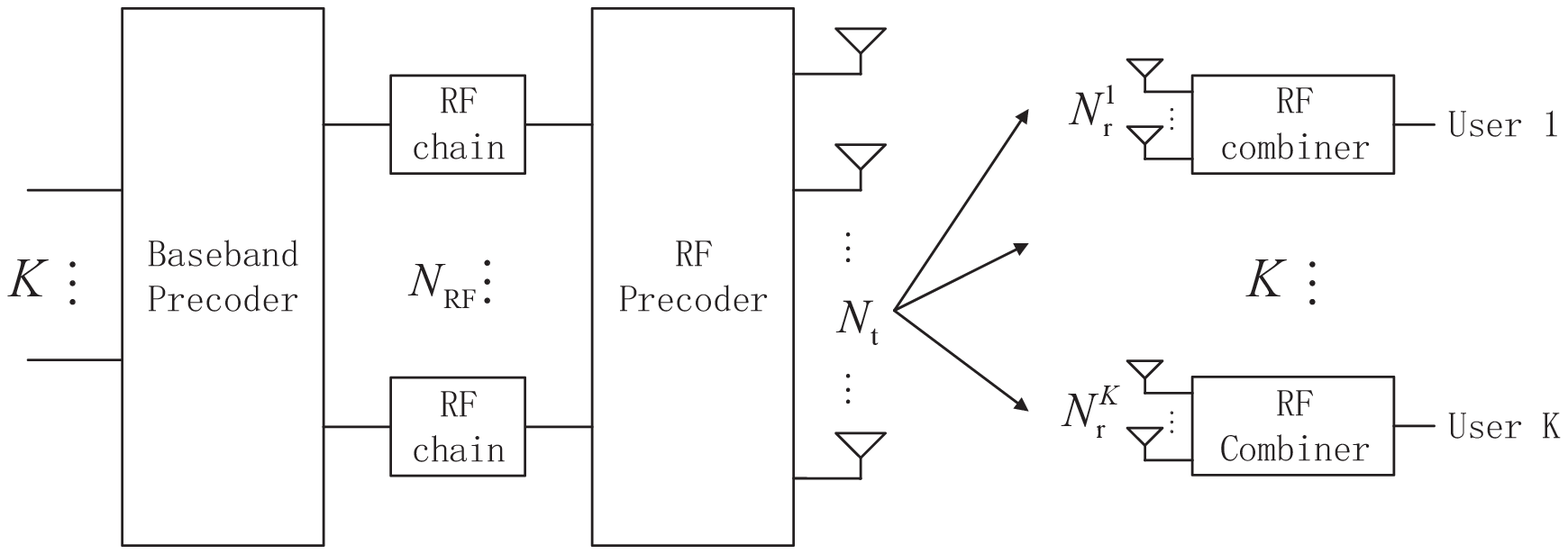}}
\caption{(a) Hybrid precoder with fully-connected structure; (b) Hybrid precoder with sub-connected structure; (c) Hybrid precoder with adaptively-connected structure; (d) A multi-user downlink
massive MIMO system with hybrid precoding.}
\label{fig.1} 
\end{figure*}

\subsection{Contributions}
In this paper, we propose the hybrid precoder design for multi-user massive MIMO systems in mmWave communication. Innovatively, we reformulate the FHP and AHP design problems as clustering problems and propose to solve them based on unsupervised learning methods. The main contributions are summarized as follows:

\begin{itemize}
 \item For FHP, we propose to reformulate the precoding problem as a clustering problem by minimizing the upper bound of the Euclidean distance between the optimal full digital precoder and the hybrid precoder. By exploring the relevance among RF chains in optimal hybrid procoder,  the hierarchical-agglomerative-clustering-based (HAC-based) FHP scheme is proposed with a novel defined distance function. Moreover, upper bound of the proposed scheme is analyzed.
 \item For AHP, we utilize the characteristic of the structure and simplify the precoding problem as a semi-unitary matrix factorization problem, which is equivalent to a clustering problem. By exploring the relevance among antennas at the BS, the modified-K-means-based (MKM-based) AHP scheme is proposed to make full use of the flexible connection.
 \item In the MKM-based AHP scheme, we propose to update the clustering centers with the alternating-optimization-based (AO-based) algorithm, where a specific initialization scheme is developed to reduce the computational complexity. In addition, we clarify that the proposed AO-based algorithm is feasible for SHP design.

\end{itemize}

Simulation results demonstrate the superiority of the proposed clustering-based precoding scheme. Specifically, the HAC-based FHP scheme provides close spectral efficiency to the full digital precoder with insufficient RF chains. The MKM-based AHP scheme contributes to high power efficiency. Including the AO-based SHP scheme, all the proposed schemes provide satisfying performance gain compared with the existing work.

\subsection{Organization}
The remainder of this paper is organized as follows. Section II introduces the system model, channel model, and the problem formulation for the precoding design. In Section III and IV, two clustering-based schemes are proposed for FHP and AHP, respectively. Particularly, the AO-based algorithm in Section IV-B is proposed as a suitable scheme for SHP design. Then, simulation results are presented in Section V to demonstrate the superior performance of proposed schemes. Finally, some conclusions are given in Section VI.


   \emph{Notations}: $a$, $\bf{a}$ and $\bf{A}$ denote a scalar, vector and matrix, respectively. For a given matrix $\mathbf A$, ${\mathbf A^T }$, ${\mathbf A^H }$, ${\mathbf A^{-1} }$, ${\mathbf A^{\dagger} }$, and ${\rm{r}}(\bf{A})$ denote its transpose, conjugate transpose, inverse, pseudo-inverse and rank, respectively. $\mathbf{A}(m,n)$, $\mathbf{A}(m,:)$ and $\mathbf{A}(:,n)$ denote the $(m,n)$-th entry, the $m$-th row and $n$-th column of $\mathbf{A}$, respectively. ${{\bf{A}}^{[:n]}}$ and ${{\bf{A}}^{[n:]}}$ denote the submatrices of ${\bf{A}}$ formed by the first $n$ columns and rows. The Frobenius norm and $\ell_2$ norm are noted by ${\left\|  \cdot  \right\|_F}$ and ${\left|  \cdot  \right|_2}$. ${{\bf{I}}_N}$ denotes the $N \times N$ identity matrix, while $\mathbf 0_{M\times N}$ denotes the $M\times N$ all-zero matrix. $\mathcal{CN}(\bm \alpha,\mathbf R)$ denotes the complex Gaussian distribution with mean $\bm \alpha$ and covariance $\mathbf R$. $\emptyset$ denotes the empty set. $\mathbb{E}\left [ \cdot \right ]$ denotes the expectation. The magnitude and phase of a complex scalar are denoted by ${\left|  \cdot  \right|}$ and ${\rm{arg}}\{ \cdot \}$. The SVD of $\mathbf A$ is in the form of  $\mathbf A = {\mathbf U}{\mathbf {\Sigma}}{\mathbf V}^H$, where $\bf{U}$ and $\bf{V}$ are left-singular and right-singular matrices, and ${{\bf{\Sigma }}}$ is a rectangular diagonal matrix with descending ordered singular values on the diagonal.

\section{System Model and Problem Formulation}

\subsection{System Model}
As shown in Fig. 1(d), we consider the downlink communication of a multi-user massive MIMO system with a hybrid precoder and combiners. The base station (BS) is equipped with ${N_{{\rm{RF}}}}$ RF chains and ${N_{\rm{t}}}$ transmit antennas to serve $K$ users. Based on the principles of low-cost and low-power consumption for mobile terminal design \cite{fuladd2}, we further assume that the $k$-th user is equipped with one RF chain and ${{N_{{\rm{r}},k}}}$ receive antennas. Thus, each user shall require only a single data stream from the BS. Due to the low dimensionality of digital precoding in the hybrid structure, we typically have $K \le {N_{{\rm{RF}}}} \ll {N_{\rm{t}}}$. Mathematically, the linear transmit precoded signal from the BS can be represented as
  \begin{equation}
  \mathbf{s} = \mathbf{F}_{{\rm{RF}}}  \mathbf{F}_{{\rm{BB}}}  \mathbf{x},
  \end{equation}
where $\mathbf{F}_{{\rm{RF}}}\in \mathbb{C}^{N_{{\rm{t}}} \times N_{{\rm{RF}}}}$ denotes the analog precoder in the RF domain, $\mathbf{F}_{\rm{BB}} \in \mathbb{C}^{N_{{\rm{RF}}} \times K}$ denotes the digital precoder in the baseband, and $\mathbf{x} \in \mathbb{C}^{K \times 1}$ denotes the transmit symbol vector. Without loss of generality, the average total transmit power of the BS is setted as $P$ with $\mathbf{x}$ satisfying $\mathbb{E}\left [ \mathbf{xx}^{H}\right ] = \frac{P}{K}\mathbf{I}_{K}$. Thus, the power constraint of the overall hybrid precoder can be given by
  \begin{equation}
  \left \| \mathbf{F}_{{\rm{RF}}}\mathbf{F}_{{\rm{BB}}} \right \|_{F}^{2}=K.
  \end{equation}

For simplicity, the block-fading channel model is adopted in this paper. At the $k$-th user, the received signal is further processed by own RF combiner, which can be expressed as
\begin{equation}
  {r_k} = {\bf{w}}_k^H{{\bf{H}}_k}{{\bf{F}}_{{\rm{RF}}}}{{\bf{F}}_{{\rm{BB}}}}{\bf{x}} + {\bf{w}}_k^H{{\bf{n}}_k},
  \end{equation}
where ${{\bf{w}}_k} \in {\mathbb{C}^{{N_{{\rm{r}},k}} \times 1}}$ denotes the RF combiner of the $k$-th user, ${{\bf{H}}_k} \in {\mathbb{C}^{{N_{{\rm{r}},k}} \times N_{\rm{t}}}}$ denotes the channel matrix between the BS and the $k$-th user, ${{\bf{n}}_k} \in {\mathbb{C}^{{N_{{\rm{r}},k}} \times 1}}$ denotes a complex Gaussian noise vector with each element obeying $\mathcal {CN}(0,\sigma _k^2)$ (assumed same for each user, i.e., $\sigma _k^2 = {\sigma ^2}, \forall k$).

The Saleh-Valenzuela model is commonly accepted to characterize the limited scattering feature of mmWave channel \cite{fuldecp1,fuladd,MYpimrc}, which is also adopted in this paper. The normalized channel for the $k$-th user consists of ${{N_{{\rm{c,}}k}}}$ scattering clusters, each of which is a sum of contributions of ${{N_{{\rm{p,}}k}}}$ propagation paths, which can be depicted as
 \begin{equation}
 {{\bf{H}}_k} = \sqrt {\frac{{{N_{\rm{t}}}{N_{{\rm{r}},k}}}}{{{N_{{\rm{c,}}k}}{N_{{\rm{p,}}k}}}}} \sum\limits_{c = 1}^{{N_{{\rm{c,}}k}}} {\sum\limits_{p = 1}^{{N_{{\rm{p,}}k}}} {{\beta _{c,p}}{{\bf{a}}_{\rm{r}}}(\theta _{c,p}^{\rm{r}},\phi _{c,p}^{\rm{r}}){\bf{a}}_{\rm{t}}^H(\theta _{c,p}^{\rm{t}},\phi _{c,p}^{\rm{t}})} } ,
 \end{equation}
where ${{\beta _{c,p}}} \sim {\cal C}{\cal N}(0,1)$ denotes the complex gain of the $p$-th path in the $c$-th cluster. In addition, ${{{\bf{a}}_{\rm{r}}}(\theta _{c,p}^{\rm{r}},\phi _{c,p}^{\rm{r}})}$ and ${{{\bf{a}}_{\rm{t}}}(\theta _{c,p}^{\rm{r}},\phi _{c,p}^{\rm{r}})}$ denote the normalized receive and transmit array response vectors corresponding to the azimuth (elevation) angle of arrival ${\theta _{c,p}^{\rm{r}}}$ (${\phi _{c,p}^{\rm{r}}}$) and departure ${\theta _{c,p}^{\rm{t}}}$ (${\phi _{c,p}^{\rm{t}}}$), respectively. Since the proposed algorithms in this paper are applicable for arbitrary antenna arrays, the uniform planar array (UPA) will be considered for the completeness of simulations. In the case of UPA, the $W\times V$-element array's response is variant in two angle domain, which can be expressed as
  \begin{equation}
\begin{split}
{{\bf{a}}_{{\rm{UPA}}}}(\theta ,\phi )= &\frac{1}{{\sqrt {WV} }}\left[ {1,...,{e^{j\frac{{2\pi d}}{\lambda }(w\sin (\theta )\sin (\phi ) + v\cos (\phi ))}},} \right.\\
&{\left. {...,{e^{j\frac{{2\pi d}}{\lambda }((W - 1)\sin (\theta )\sin (\phi ) + (V - 1)\cos (\phi ))}}} \right]^T},
\end{split}
 \end{equation}
where $0 \le w < W$ and $0 \le v < V$. As the basis of the precoding design in this paper, ${{\bf{H}}_k}$ is assumed known at the BS and the $k$-th users, i.e., the BS owns the global CSI, while each user only holds its own part \cite{Sparse,subml2,AHPmu}.

\subsection{Problem Formulation}

Since the RF combiner are realized by APSs which can only adjust the phases of signals, the entries in ${{\bf{w}}_k}$ satisfy the constant modulus constraint, which can be expressed as
 \begin{equation}
 \left| {{{\bf{w}}_k}(i)} \right| = 1/\sqrt {{N_{{\rm{r}},k}}} ,\forall i,
 \end{equation}
where $1/\sqrt {{N_{{\rm{r}},k}}}$ is the normalization parameter to satisfy $\left\| {{{\bf{w}}_k}} \right\|_F^2 = 1$. Moreover, the available phases of APSs are quantized in general due to the practical hardware constraint. Thus, a more strictly restricted RF combiner can be given by
 \begin{equation}
{{\bf{w}}_k}(i) \in \left\{ {\frac{1}{{\sqrt {{N_{{\rm{r}},k}}} }}{e^{j\frac{{2\pi q}}{{{2^Q}}}}}:q = 0,1,...,{2^Q} - 1} \right\},\forall i,
 \end{equation}
where $Q$ denotes the quantization bit number of APSs. Actually, once the unquantized ${{\bf{w}}_{k,{\rm{unq}}}}$ is obtained, the quantized one can be further given by\footnote{The similar operation is feasible for quantized RF precoder design.}
 \begin{equation}
 {{\bf{w}}_{k,{\rm{q}}}}(i) = \frac{1}{{\sqrt {{N_{{\rm{r}},k}}} }}{e^{j\frac{{2\pi \hat q}}{{{2^Q}}}}},
 \end{equation}
with
 \begin{equation}
 \hat q = \mathop {\arg \min }\limits_{q = 1,2,...,{2^Q} - 1} \left| {{{\bf{w}}_{k,{\rm{unq}}}}(i) - \frac{{{e^{j\frac{{2\pi q}}{{{2^Q}}}}}}}{{\sqrt {{N_{{\rm{r}},k}}} }}} \right|.
 \end{equation}
Obviously, a larger $Q$ contributes to higher quantization accuracy, which will be further clarified in Section V. In the following design, the unquantized RF combiner and precoder will be mainly discussed without loss of generality.

Since ${{\bf{w}}_k}$ is always a unit vector, the power of the Gaussian noise prossed by RF combiner still maintains at ${\sigma ^2}$ due to the unitary transformation. Recall the expression of the received signal (3), we obtain the signal-to-interference-plus-noise-ratio (SINR) of the $k$-th user as
  \begin{equation}
  {\rm{SIN}}{{\rm{R}}_k} = \frac{{P{{\left| {{\bf{w}}_k^H{{\bf{H}}_k}{{\bf{F}}_{{\rm{RF}}}}{{\bf{F}}_{{\rm{BB}}}}(:,k)} \right|}^2}}}{{K{\sigma ^2} + \sum\nolimits_{l \ne k} {P{{\left| {{\bf{w}}_k^H{{\bf{H}}_k}{{\bf{F}}_{{\rm{RF}}}}{{\bf{F}}_{{\rm{BB}}}}(:,l)} \right|}^2}} }}.
  \end{equation}

Since only a part of global CSI is available for each user, the inter-user interference is not visible at the user side. Thus, the multi-user system degenerates into the point-to-point system for each user, which motivates a selfish design for combiners to directly maximize the channel gain. According to \cite{Sparse}, the optimal selfish combiner for the $k$-th user is given by ${{\bf{w}}_k} = {{\bf{U}}_k}(:,1)$ with the SVD ${{\bf{U}}_k}{{\bf{\Sigma}}_k}{{\bf{V}}^H_k}={{\bf{H}}_k}$. Further based on the constraint (6), the RF combiner can be given by
  \begin{equation}
{{\bf{w}}_k}(i) = \frac{1}{{\sqrt {{N_{{\rm{r}},k}}} }}{e^{j\arg \{ {{\bf{U}}_k}(i,1)\} }}.
  \end{equation}

With the global CSI, all of the combining weights can be obtained at the BS. Then, the aim of the precoding design is to manage the inter-user interference and enhance system spectral efficiency, i.e.,
 \begin{equation}
{R = \sum\limits_{k = 1}^K {{{\log }_2}(1 + {\rm{SIN}}{{\rm{R}}_k})} }.
 \end{equation}
In this paper, we consider to minimize the Euclidean distance between optimal full digital precoding matrix ${{\bf{F}}_{{\rm{opt}}}}$ and the hybrid precoding matrix as follows
 \begin{equation}
\begin{array}{*{20}{c}}
{\mathop {\min }\limits_{{{\bf{F}}_{{\rm{RF}}}},{{\bf{F}}_{{\rm{BB}}}}} }&{\left\| {{{\bf{F}}_{{\rm{opt}}}} - {{\bf{F}}_{{\rm{RF}}}}{{\bf{F}}_{{\rm{BB}}}}} \right\|_F^2}\\
{{\rm{s}}{\rm{.t}}{\rm{.}}}&{\left\{ \begin{array}{l}
(2),\\
{\rm{constraints\;of\;}}{{\bf{F}}_{{\rm{RF}}}},
\end{array} \right.}
\end{array}
 \end{equation}
which has been proved as a sufficient precoding design scheme in \cite{Sparse}. In addition, the optimal full digital precoder is provided by the classical MF, ZF, and RZF method\footnote{The full digital precoder design is not the focus of this paper.} as
 \begin{equation}
 \begin{array}{*{20}{c}}
 {{{\bf{F}}_{{\rm{opt}}}} = \left\{ \begin{array}{l}
 \sqrt {{\gamma _{{\rm{MF}}}}} {\bf{H}}_{{\rm{eq}}}^H,\\
 \sqrt {{\gamma _{{\rm{ZF}}}}} {\bf{H}}_{{\rm{eq}}}^\dag, \\
 \sqrt {{\gamma _{{\rm{RZF}}}}} {\bf{H}}_{{\rm{eq}}}^H{({{\bf{H}}_{{\rm{eq}}}}{\bf{H}}_{{\rm{eq}}}^H + \beta {{\bf{I}}_K})^{ - 1},}
 \end{array} \right.}
 \end{array}
 \end{equation}
where $\sqrt {{\gamma _{{\rm{MF}}}}}$, $\sqrt {{\gamma _{{\rm{ZF}}}}}$, $\sqrt {{\gamma _{{\rm{RZF}}}}}$ denote normalization parameters to ensure the digital precoder satisfying the power constraint $\left\| {{{\bf{F}}_{{\rm{opt}}}}} \right\|_F^2 = K$, $\beta$ denotes the regularization parameter given by ${K{\sigma ^2}/P}$ \cite{ZFR}, ${\bf{H}}_{{\rm{eq}}}$ denotes the equivalent channel which can be expressed in a matrix form as
 \begin{equation}
{{\bf{H}}_{{\rm{eq}}}} = \left[ {\begin{array}{*{20}{c}}
{{\bf{w}}_1^H}&{}&{}&{}\\
{}&{{\bf{w}}_2^H}&{}&{}\\
{}&{}& \ddots &{}\\
{}&{}&{}&{{\bf{w}}_K^H}
\end{array}} \right]\left[ {\begin{array}{*{20}{c}}
{{{\bf{H}}_1}}\\
{{{\bf{H}}_2}}\\
 \vdots \\
{{{\bf{H}}_K}}
\end{array}} \right].
 \end{equation}
The impact of different digital precoding methods will be further clarified in Section V.

Although the objective function of (13) is in a simple form, the problem is still hard to solve due to the coupling of analog and digital precoding matrices as well as the unit modulus constraints of the analog precoder. Moreover, different connection structures between antennas and RF chains will bring various constraints to the problem. Based on the thought of clustering in unsupervised learning, the subsequent parts of this paper solve the optimization problem for FHP and AHP.

 \emph{Remark}: According to the expression of the received signal (3), the inter-user interference can be handled both at the combining and precoding sides. If the interference is handled at the combining side, all of the combiners should be jointly optimized at the BS, since the sufficient computing capacity and global CSI are not available to a single user terminal. Nevertheless, because the BS needs to feed back the combiner design to all users, even if a slight change occurs in the system, novel feedback will be required, which leads to unacceptable overhead in practice. Thus, the task of handling interference is completely put on the precoder at the BS, while the combiners can be simply given by a selfish design scheme in this paper.

\section{Clustering Based Fully-Connected Hybrid Precoding}
In this section, we first clarify the constraints of the RF precoder for FHP. Considering the optimal hybrid precoding scheme in \cite{optfhp}, we formulate the hybrid precoder design problem for the cases with insufficient RF chains, and further simplify the original problem by minimizing the upper bound of the objective function, which can be regarded as a clustering problem. Then, with the novel defined inter-cluster distance, the clustering and the center design parts are successively operated to develop the HAC-based FHP scheme. Essentially, the proposed scheme explores the relevance among RF chains in optimal hybrid precoder and merges the similar RF chains into a novel RF chain to efficiently reflect the original performance.

\subsection{Problem Derivation for FHP}
Since each pair of RF chain and antenna is connected via an APS and an RF adder in FHP, the constraint of RF precoder in (13) can be specified by
 \begin{equation}
 \left| {{{\bf{F}}_{{\rm{RF}}}}(i,n)} \right|=1.
 \end{equation}

It has been pointed out in \cite{optfhp} that $2N_{\rm{s}}$ (twice the number of transmit streams) RF chains are sufficient for FHP to realize the same performance as the optimal full digital precoder, i.e., ${{\bf{F}}_{{\rm{opt}}}} = {{\bf{F}}^{\star}_{{\rm{RF}}}}{{\bf{F}}^{\star}_{{\rm{BB}}}}$. With the single-stream transmission for each user, the original problem (13) is equivalent to\footnote{For the convenience of further derivation, the square of the Frobenius norm is neglected, which will not affect the solution.}
 \begin{equation}
 \begin{array}{*{20}{c}}
 {\mathop {\min }\limits_{{{\bf{F}}_{{\rm{RF}}}},{{\bf{F}}_{{\rm{BB}}}}} }&{{{\left\| {{\bf{F}}_{{\rm{RF}}}^ \star {\bf{F}}_{{\rm{BB}}}^ \star  - {{\bf{F}}_{{\rm{RF}}}}{{\bf{F}}_{{\rm{BB}}}}} \right\|}_F}}\\
 {{\rm{s}}.{\rm{t}}.}&{(2),(16)},
 \end{array}
 \end{equation}
where the number of columns in $\mathbf{F}_{{\rm{RF}}}^ \star$ and the number of rows in $\mathbf{F}_{{\rm{BB}}}^ \star$ are both $2K$. For the cases with insufficient RF chains, we have $K \leq N_{\rm{RF}} \leq 2K$. Due to the complexity of the problem (17), we consider to minimize the upper bound of the objective function. Based on ${\left\| {{\bf{A}} + {\bf{B}}} \right\|_F} \le {\left\| {\bf{A}} \right\|_F} + {\left\| {\bf{B}} \right\|_F}$, (17) can be simplified as
 \begin{equation}
\begin{array}{*{20}{c}}
{\mathop {\min }\limits_{{{\bf{F}}_{{\rm{RF}}}},{{\bf{F}}_{{\rm{BB}}}}} }&{\sum\limits_{c = 1}^{{N_{{\rm{RF}}}}} {{{\left\| {\sum\limits_{m \in {\Gamma _c}} {{\bf{F}}_{{\rm{HP}}}^ \star (m)}  - {{\bf{F}}_{{\rm{HP}}}}(c)} \right\|}_F}} }\\
{{\rm{s}}.{\rm{t}}.}&{\left\{ {\begin{array}{*{20}{l}}
{(2),(16),}\\
\begin{array}{l}
{\Gamma _c} \cap {\Gamma _d} = \emptyset ,\forall c \ne d,\\
\bigcup\limits_{c = 1}^{{N_{{\rm{RF}}}}} {{\Gamma _c}}  = \left\{ {1,2,...,2K} \right\},
\end{array}
\end{array}} \right.}
\end{array}
 \end{equation}
with the denotations ${{\bf{F}}_{{\rm{HP}}}^ \star (m)}={{\bf{F}}_{{\rm{RF}}}^ \star (:,m)}{{\bf{F}}_{{\rm{BB}}}^ \star (m,:)}$ and ${{\bf{F}}_{{\rm{HP}}}}(c) = {{\bf{F}}_{{\rm{RF}}}}(:,c){{\bf{F}}_{{\rm{BB}}}}(c,:)$. With such form of the objective function, (18) can be regarded as an atypical clustering problem. Specifically, the set of data samples is composed of $2K$ matrices with ${{\bf{F}}_{{\rm{HP}}}^ \star (m)}$ denoting each of them. The data samples are expected to be divided into ${{N_{{\rm{RF}}}}}$ clusters. For the $c$-th cluster, ${{\bf{F}}_{{\rm{HP}}}}(c)$ denotes the clustering center, while ${{\Gamma _c}}$ denotes the index set of members. In terms of data set composition, each RF chain in the optimal hybrid precoder generates a data sample, which implies the design of fully-connected structure is essentially the clustering of RF chains.

Based on (18), the FHP scheme can be developed with the successive operation of the clustering part and the center design part. The clustering part is to determine ${{N_{{\rm{RF}}}}}$ member index sets corresponding to expected clusters. For different ${{N_{{\rm{RF}}}}}$ cases, the HAC method is feasible for this part, where the key point is to define the distance between clusters. Then, in the center design part, each cluster will generate its own clustering center which is essentially corresponding to a certain component of the expected hybrid precoding matrix.

\subsection{Clustering and Center Design}

With each initial cluster formed by a single data sample, the key approach of the clustering part in HAC method is to find a pair of clusters with the smallest distance and merge them into a new cluster \cite{HAC}. For the fairness among RF chains, the distance between the $c$-th and the $d$-th cluster is given by the mean distance between members of each cluster as
 \begin{equation}
{\boldsymbol{\mathfrak{D}}}(c,d) = \frac{1}{{\left| {{\Gamma _c}} \right|\left| {{\Gamma _d}} \right|}}\sum\limits_{m \in {\Gamma _c}} {\sum\limits_{n \in {\Gamma _d}} {{\cal D}_{\rm{F}}({\bf{F}}_{{\rm{HP}}}^ \star (m),{\bf{F}}_{{\rm{HP}}}^ \star (n))} },
 \end{equation}
where ${\left| {{\Gamma}} \right|}$ denotes the total number of the members in set ${{\Gamma}}$, ${\cal D}_{\rm{F}}(\bf{A},\bf{B})$ denotes the inter-sample distance between $\bf{A}$ and $\bf{B}$. Generally, since the clustering center can be regarded as a virtual sample, the inter-sample distance can be similarly given based on the intra-cluster distance\footnote{For the typical intra-cluster distance given by the average distance between samples and the center $\frac{1}{{\left| {{\Gamma _c}} \right|}}\sum\nolimits_{m \in {\Gamma _c}} {{{\left\| {{\bf{F}}_{{\rm{HP}}}^ \star (m) - {{\bf{F}}_{{\rm{HP}}}}(c)} \right\|}_F}} $, the inter-sample distance can be simply given by ${\cal D}_{\rm{F}}({\bf{A}},{\bf{B}}) = {\left\| {{\bf{A}} - {\bf{B}}} \right\|_F}$.}. However, the intra-cluster distance in (19) is given by ${\left\| {\sum\nolimits_{m \in {\Gamma _c}} {{\bf{F}}_{{\rm{HP}}}^ \star (m)}  - {{\bf{F}}_{{\rm{HP}}}}(c)} \right\|_F}$, which is not standard. Thus, we consider to redefine the inter-sample distance function based on the following proposition.
\newtheorem*{prop}{Proposition 1}
\begin{prop}
Under the condition that the column space of ${\bf{F}}_{\rm{RF}}$ keeps unchanged, arbitrarily adjusted ${\bf{F}}_{\rm{RF}}$ will make no difference to hybrid precoding performance if ${\bf{F}}_{\rm{BB}}$ is appropriately adjusted.
\end{prop}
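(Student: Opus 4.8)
The plan is to show that the hybrid precoding performance is a function of $\mathbf{F}_{\rm{RF}}$ and $\mathbf{F}_{\rm{BB}}$ only through their product $\mathbf{F}_{\rm{RF}}\mathbf{F}_{\rm{BB}}$, and then to construct, for any column-space-preserving adjustment of $\mathbf{F}_{\rm{RF}}$, a compensating $\mathbf{F}_{\rm{BB}}$ that leaves this product intact. First I would inspect the SINR expression (10), and hence the spectral efficiency (12) and the distance objective (13): in every case the precoder enters only as the effective matrix $\mathbf{F}_{\rm{RF}}\mathbf{F}_{\rm{BB}}$, so any two pairs $(\mathbf{F}_{\rm{RF}},\mathbf{F}_{\rm{BB}})$ and $(\tilde{\mathbf{F}}_{\rm{RF}},\tilde{\mathbf{F}}_{\rm{BB}})$ yielding the same product are performance-equivalent. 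Reducing the claim to this product-invariance is the conceptual core and removes the need to reason about the channel or combiners at all.

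Next I would set up the change-of-basis argument. Let $\tilde{\mathbf{F}}_{\rm{RF}}$ be an arbitrary adjustment of $\mathbf{F}_{\rm{RF}}$ sharing the same column space. Assuming both have full column rank $N_{\rm{RF}}$, their columns form two bases of the same $N_{\rm{RF}}$-dimensional subspace of $\mathbb{C}^{N_{\rm{t}}}$, so there exists a unique invertible $\mathbf{T} \in \mathbb{C}^{N_{\rm{RF}} \times N_{\rm{RF}}}$ with $\tilde{\mathbf{F}}_{\rm{RF}} = \mathbf{F}_{\rm{RF}}\mathbf{T}$. Choosing the compensating digital precoder as $\tilde{\mathbf{F}}_{\rm{BB}} = \mathbf{T}^{-1}\mathbf{F}_{\rm{BB}}$ then gives $\tilde{\mathbf{F}}_{\rm{RF}}\tilde{\mathbf{F}}_{\rm{BB}} = \mathbf{F}_{\rm{RF}}\mathbf{T}\mathbf{T}^{-1}\mathbf{F}_{\rm{BB}} = \mathbf{F}_{\rm{RF}}\mathbf{F}_{\rm{BB}}$, which by the first step proves the performance is unchanged. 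I would also verify that the power constraint (2) is automatically preserved, since $\|\tilde{\mathbf{F}}_{\rm{RF}}\tilde{\mathbf{F}}_{\rm{BB}}\|_F^2 = \|\mathbf{F}_{\rm{RF}}\mathbf{F}_{\rm{BB}}\|_F^2 = K$ follows directly from product-invariance.

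The main obstacle I anticipate is the existence and invertibility of $\mathbf{T}$, which hinges on the full-column-rank assumption. I would argue that full column rank is the generic and relevant regime here, since the analog precoder is meant to span an $N_{\rm{RF}}$-dimensional signal space and a rank-deficient $\mathbf{F}_{\rm{RF}}$ would simply waste RF chains. If one wishes to cover the rank-deficient situation, $\mathbf{T}$ is neither unique nor square-invertible, and I would instead factor through a common orthonormal basis of the shared column space, writing each analog precoder as that basis times a full-rank coordinate matrix and absorbing the coordinate transformation into $\mathbf{F}_{\rm{BB}}$ via a pseudo-inverse; the product-invariance conclusion survives but the bookkeeping is heavier. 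A secondary caveat is the constant-modulus constraint (16): the adjusted $\tilde{\mathbf{F}}_{\rm{RF}}$ is admissible only if it still satisfies (16), so the proposition is best read as asserting freedom \emph{within} the family of feasible analog precoders sharing a given column space, which is exactly the freedom the subsequent HAC-based clustering design will exploit.
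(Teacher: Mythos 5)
Your proposal is correct and follows essentially the same route as the paper's Appendix A: express the adjusted analog precoder as the original times a coordinate transformation and absorb that transformation's (pseudo-)inverse into $\mathbf{F}_{\rm BB}$, so that the product $\mathbf{F}_{\rm RF}\mathbf{F}_{\rm BB}$ --- through which all performance metrics enter --- is unchanged. The only difference is that the paper proves directly the general setting you relegate to a sketch (an arbitrary $\bar{\mathbf{F}}_{\rm RF}=\mathbf{F}_{\rm RF}\mathbf{A}$, possibly rank-deficient and with a different column count $\bar{N}_{\rm RF}$, handled via the SVD basis $\mathbf{U}_{\rm RF}^{[:r]}$ and a full-row-rank $\mathbf{P}$ satisfying $\mathbf{P}\mathbf{P}^{\dagger}=\mathbf{I}$), which is precisely the case that matters for the HAC application where $2K$ RF chains are merged into $N_{\rm RF}<2K$, so your invertible-$\mathbf{T}$ argument alone would not suffice there, but your pseudo-inverse fallback matches the paper's construction.
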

\begin{proof}
Please refer to the APPENDIX A.
\end{proof}

According to \textbf{Proposition 1}, the maintenance of the column space of ${\bf{F}}_{\rm{RF}}$ is the key point for precoding design. In (18), the column space of each sample is corresponding to a specific column of ${{\bf{F}}_{{\rm{RF}}}^ \star }$. To reduce the loss of the column space in the process of clustering, the samples with similar column space should be merged. Accordingly, we provide a reasonable distance definition for ${{\bf{F}}_{{\rm{HP}}}^ \star (m)}$ and ${{\bf{F}}_{{\rm{HP}}}^ \star (n)}$ with the inner product of ${{\bf{F}}_{{\rm{RF}}}^ \star (:,m)}$ and ${{\bf{F}}_{{\rm{RF}}}^ \star (:,n)}$, i.e.,
 \begin{equation}
{\cal D}_{\rm{F}}({\bf{F}}_{{\rm{HP}}}^ \star (m),{\bf{F}}_{{\rm{HP}}}^ \star (n)) = {\left| {{\bf{F}}_{{\rm{RF}}}^ \star {{(:,m)}^H}{\bf{F}}_{{\rm{RF}}}^ \star (:,n)} \right|^{ - 1}},
 \end{equation}
where a large inner product means a high similarity in column space and a small distance between samples.

With the definition of the distance for clusters and samples, the clustering part is operable to determine the membership of each cluster, i.e., $\forall{\Gamma _c}$ in (18) is determined. Then, the problem (18) can be decomposed into ${{N}}_{\rm{RF}}$ subproblems, where the center of the $c$-th cluster is obtained to generate a certain component of hybrid precoding matrix by
 \begin{equation}
\begin{array}{*{20}{c}}
{\mathop {\min }\limits_{{{\bf{F}}_{{\rm{RF}}}}(:,c)} }&{{{\left\| {\sum\limits_{m \in {\Gamma _c}} {{\bf{F}}_{{\rm{HP}}}^ \star (m)}  - {{\bf{F}}_{{\rm{HP}}}}(c)} \right\|}_F}}\\
{{\rm{s}}.{\rm{t}}.}&{(16)}.
\end{array}
 \end{equation}
Due to the decomposition, the power constraint (2) is temporarily neglected in each subproblem, which slightly brings about negative impacts \cite{AMmethod}. Essentially, each subproblem provides the design for a certain RF chain. Without the modulus constraint (16), the optimal solution can be given by ${\bf{F}}_{{\rm{RF}}}(:,c) = {{\bf{U}}_{{c}}}(:,1)$, where ${{\bf{U}}_c}{{\bf{\Sigma }}_c}{\bf{V}}_c^H = \sum\nolimits_{m \in {\Gamma _c}} {{\bf{F}}_{{\rm{HP}}}^ \star (m)} $. Therefore, a near optimal solution can be given by
  \begin{equation}
  {\bf{F}}_{{\rm{RF}}}(i,c) = {e^{j\arg \{ {{\bf{U}}_c}(i,1)\} }}.
  \end{equation}
Owing to the inter-sample distance definition, the samples in the same cluster tend to have similar column space, which means more power of $\sum\nolimits_{m \in {\Gamma _c}} {{\bf{F}}_{{\rm{HP}}}^ \star (m)} $ is converged in the maximum singular value, and contributes to smaller approximation error for problem (21).

\vspace{-0.3em} 
\subsection{Proposed HAC-based FHP Scheme}

 \begin{algorithm}[!t]
\caption{Proposed HAC-based FHP algorithm}
\begin{algorithmic}[1]
\REQUIRE Optimal full digital precoder ${\bf{F}}_{\rm{opt}}$, the number of RF chains $N_{\rm{RF}}$.
\ENSURE ${\bf{F}}_{\rm{RF}}$, ${\bf{F}}_{\rm{BB}}$.
\STATE Obtain the ideal fully-connected hybrid precoder design ${{\bf{F}}_{{\rm{opt}}}} = {{\bf{F}}^{\star}_{{\rm{RF}}}}{{\bf{F}}^{\star}_{{\rm{BB}}}}$ based on the expressions in \cite{optfhp}.
\STATE Initialize the number of clusters by $N_{\rm{cl}}=2K$.
\FOR{ $c = 1$ to $N_{\rm{cl}}$ }
\STATE Initialize the $c$-th cluster with the member ${{\bf{F}}^{\star}_{{\rm{HP}}}}(c) = {{\bf{F}}^{\star}_{{\rm{RF}}}}(:,c){{\bf{F}}^{\star}_{{\rm{BB}}}}(c,:)$ and the member index set ${{\Gamma _c}} = \left\{ c \right\}$.
\ENDFOR
\STATE For $\forall n \ne m$, calculate the distance between ${{\bf{F}}^{\star}_{{\rm{HP}}}}(n)$ and ${{\bf{F}}^{\star}_{{\rm{HP}}}}(m)$ based on (20).
\WHILE {$N_{\rm{cl}}>N_{\rm{RF}}$}
\STATE Calculate inter-cluster distance ${\boldsymbol{\mathfrak{D}}}(c,d), \forall c \ne d$ based on (19).
\STATE Find the pair of clusters with the smallest distance, i.e., $\left\langle {{\Gamma _{{c^ \star }}},{\Gamma _{{d^ \star }}}} \right\rangle  = \mathop {\arg \min }\limits_{{\Gamma _c},{\Gamma _d}} {\boldsymbol{\mathfrak{D}}}(c,d)$.
\STATE Merge the nearest clusters ${\Gamma _{{c^ \star }}} = {\Gamma _{{c^ \star }}} \cup {\Gamma _{{d^ \star }}},{\Gamma _{{d^ \star }}} = \emptyset $.
\STATE Update the number of current clusters by $N_{\rm{cl}}=N_{\rm{cl}}-1$, and renumber the non-empty set ${\Gamma}$ from $1$ to $N_{\rm{cl}}$.
\ENDWHILE
\STATE For each ${{\Gamma _c}}$, generate ${\bf{F}}_{\rm{RF}}(:,c)$ based on (22).
\STATE Obtain ${\bf{F}}_{\rm{BB}}$ by LS method, or the classical digital precoding methods (15) with the effective baseband channel.
\end{algorithmic}
\end{algorithm}


As a summary, the HAC-based FHP design is shown in Algorithm 1. Initially, each sample is regarded as a single cluster. From step 7 to step 12, the pair of clusters with minimum distance are merged in each loop, until the number of cluster is reduced to $N_{\rm{RF}}$. With the determined clusters, the RF precoder is obtained based on (22). Finally, the baseband precoder can be obtained by two alternative schemes. One scheme is based on least-square (LS) method, i.e., ${{\bf{F}}_{{\rm{BB}}}} = \sqrt {{\gamma _{{\rm{LS}}}}} {\bf{F}}_{{\rm{RF}}}^\dag {{\bf{F}}_{{\rm{opt}}}}$, where $\sqrt{\gamma _{{\rm{LS}}}}$ denotes the normalization parameter to satisfy the precoding power constraint. Obviously, the performance of this scheme mainly depends on the similarity between the column space of ${\bf{F}}_{{\rm{RF}}}$ and ${{\bf{F}}_{{\rm{opt}}}}$. The other scheme is based on classical digital precoding methods. Similar to (14), the second scheme is operated with the effective baseband channel ${{\bf{H}}_{{\rm{BB}}}} = {{\bf{H}}_{\rm{eq}}}{{\bf{F}}_{\rm{RF}}}$, which abates the influence of column space similarity to a certain extent. In terms of the upper bound, we provide the comparison of two alternative schemes in the following theorem. Detail discusses for the selection of these two schemes will be presented in Section V.
\newtheorem*{theorem}{Theorem 1}
\begin{theorem}
With ${\bf{F}}_{{\rm{opt}}} = \sqrt {{\gamma _{{\rm{ZF}}}}} {\bf{H}}_{{\rm{eq}}}^\dag$ as the input, the upper bounds of the system spectral efficiency for the HAC-based FHP design with the LS and ZF refinement schemes are respectively given by
 \begin{equation}
\begin{array}{l}
{R_{{\rm{LS}}}} = K{\log _2}(1 + \frac{P}{{{\sigma ^2}\left\| {{\bf{H}}_{{\rm{eq}}}^\dag } \right\|_F^2}}),\\
{R_{{\rm{ZF}}}} = K{\log _2}(1 + \frac{P}{{{\sigma ^2}\left\| {{{\bf{F}}_{{\rm{RF}}}}{\bf{H}}_{{\rm{BB}}}^\dag } \right\|_F^2}}).
\end{array}
 \end{equation}
And $R_{\rm{ZF}} \leq R_{\rm{LS}}$ is always satisfied.
\end{theorem}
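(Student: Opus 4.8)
The plan is to reduce both spectral-efficiency expressions to the common form $K\log_2(1 + P/(\sigma^2 x))$ and then compare the two effective noise-amplification denominators $x$. First I would compute the cascade channel seen by the users for each refinement scheme. For the ZF refinement, writing $\mathbf{H}_{\rm{BB}} = \mathbf{H}_{\rm{eq}}\mathbf{F}_{\rm{RF}}$ and $\mathbf{F}_{\rm{BB}} = \sqrt{\gamma}\,\mathbf{H}_{\rm{BB}}^{\dagger}$, the end-to-end matrix is $\mathbf{H}_{\rm{eq}}\mathbf{F}_{\rm{RF}}\mathbf{F}_{\rm{BB}} = \sqrt{\gamma}\,\mathbf{H}_{\rm{BB}}\mathbf{H}_{\rm{BB}}^{\dagger} = \sqrt{\gamma}\,\mathbf{I}_K$, where I use that $\mathbf{H}_{\rm{BB}}$ has full row rank $K$ (which holds generically when $N_{\rm{RF}}\ge K$). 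Every off-diagonal term in the SINR (10) therefore vanishes, so the interference is removed and each SINR becomes a pure SNR. Imposing the power constraint (2) fixes $\gamma = K/\|\mathbf{F}_{\rm{RF}}\mathbf{H}_{\rm{BB}}^{\dagger}\|_F^2$, and substituting into the per-user SINR gives $P/(\sigma^2\|\mathbf{F}_{\rm{RF}}\mathbf{H}_{\rm{BB}}^{\dagger}\|_F^2)$ for every $k$; summing $\log_2(1+\cdot)$ over the $K$ users yields the stated $R_{\rm{ZF}}$ (which for the ZF refinement is in fact exact, hence trivially an upper bound).

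For the LS refinement I would observe that $R_{\rm{LS}}$ is exactly the spectral efficiency of the full-digital ZF precoder $\mathbf{F}_{\rm{opt}} = \sqrt{\gamma_{\rm{ZF}}}\,\mathbf{H}_{\rm{eq}}^{\dagger}$: with $\gamma_{\rm{ZF}} = K/\|\mathbf{H}_{\rm{eq}}^{\dagger}\|_F^2$ the cascade $\mathbf{H}_{\rm{eq}}\mathbf{F}_{\rm{opt}} = \sqrt{\gamma_{\rm{ZF}}}\,\mathbf{I}_K$ is interference-free and each SINR equals $P/(\sigma^2\|\mathbf{H}_{\rm{eq}}^{\dagger}\|_F^2)$. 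Since the LS step merely projects $\mathbf{F}_{\rm{opt}}$ onto the column space of $\mathbf{F}_{\rm{RF}}$, it reproduces $\mathbf{F}_{\rm{opt}}$ verbatim in the best case (when $\mathrm{range}(\mathbf{F}_{\rm{opt}})\subseteq\mathrm{range}(\mathbf{F}_{\rm{RF}})$) and can do no better; this is reflected by $R_{\rm{LS}}$ depending only on $\mathbf{H}_{\rm{eq}}$, confirming it is the claimed upper bound.

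Because $x\mapsto K\log_2(1 + P/(\sigma^2 x))$ is strictly decreasing, the inequality $R_{\rm{ZF}}\le R_{\rm{LS}}$ is equivalent to $\|\mathbf{F}_{\rm{RF}}\mathbf{H}_{\rm{BB}}^{\dagger}\|_F^2 \ge \|\mathbf{H}_{\rm{eq}}^{\dagger}\|_F^2$, which is where the real work lies. The key observation is that both matrices are right inverses of $\mathbf{H}_{\rm{eq}}$: indeed $\mathbf{H}_{\rm{eq}}(\mathbf{F}_{\rm{RF}}\mathbf{H}_{\rm{BB}}^{\dagger}) = \mathbf{H}_{\rm{BB}}\mathbf{H}_{\rm{BB}}^{\dagger} = \mathbf{I}_K = \mathbf{H}_{\rm{eq}}\mathbf{H}_{\rm{eq}}^{\dagger}$. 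I would then invoke the minimum-Frobenius-norm property of the Moore-Penrose pseudo-inverse: writing any right inverse as $\mathbf{G} = \mathbf{H}_{\rm{eq}}^{\dagger} + \mathbf{Z}$, the constraint $\mathbf{H}_{\rm{eq}}\mathbf{G} = \mathbf{I}_K$ forces the columns of $\mathbf{Z}$ into $\ker(\mathbf{H}_{\rm{eq}})$, whereas the columns of $\mathbf{H}_{\rm{eq}}^{\dagger} = \mathbf{H}_{\rm{eq}}^H(\mathbf{H}_{\rm{eq}}\mathbf{H}_{\rm{eq}}^H)^{-1}$ lie in $\mathrm{range}(\mathbf{H}_{\rm{eq}}^H) = \ker(\mathbf{H}_{\rm{eq}})^{\perp}$. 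Hence $\mathrm{tr}\{(\mathbf{H}_{\rm{eq}}^{\dagger})^H\mathbf{Z}\} = 0$, giving the Pythagorean identity $\|\mathbf{G}\|_F^2 = \|\mathbf{H}_{\rm{eq}}^{\dagger}\|_F^2 + \|\mathbf{Z}\|_F^2 \ge \|\mathbf{H}_{\rm{eq}}^{\dagger}\|_F^2$. Applying this with $\mathbf{G} = \mathbf{F}_{\rm{RF}}\mathbf{H}_{\rm{BB}}^{\dagger}$ closes the argument.

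I expect the main obstacle to be the clean verification that $\mathbf{F}_{\rm{RF}}\mathbf{H}_{\rm{BB}}^{\dagger}$ is a genuine right inverse of $\mathbf{H}_{\rm{eq}}$, together with the orthogonal row-space/null-space decomposition underlying the minimum-norm property; the two rate-formula derivations and the monotonicity reduction are routine by comparison. A secondary point to handle carefully is the full-row-rank assumption on $\mathbf{H}_{\rm{BB}}$, since it simultaneously guarantees the interference cancellation $\mathbf{H}_{\rm{BB}}\mathbf{H}_{\rm{BB}}^{\dagger} = \mathbf{I}_K$ and the right-inverse identity on which the final inequality rests.
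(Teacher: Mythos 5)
Your proposal is correct, and while your derivations of the two rate formulas follow essentially the same path as the paper's Appendix B (interference-free cascades $\mathbf{H}_{\rm eq}\mathbf{F}_{\rm RF}\mathbf{F}_{\rm BB}\propto\mathbf{I}_K$ plus the power normalization, with the LS bound resting on the ideal column-space condition of Proposition 1), your proof of the key inequality $R_{\rm ZF}\le R_{\rm LS}$ takes a genuinely different route. The paper writes $\mathbf{H}_{\rm eq}=\mathbf{U}_{\rm eq}\mathbf{\Sigma}_{\rm eq}\mathbf{V}_{\rm eq}^H$, uses the ideal-case assumption to represent $\mathbf{V}_{\rm eq}=\mathbf{F}_{\rm RF}\mathbf{T}$, inserts the identity $\mathbf{V}_{\rm eq}^H\mathbf{F}_{\rm RF}\left(\mathbf{V}_{\rm eq}^H\mathbf{F}_{\rm RF}\right)^{\dagger}=\mathbf{I}$ into $\left\|\mathbf{H}_{\rm eq}^{\dagger}\right\|_F^2$, drops the semi-unitary factor $\mathbf{V}_{\rm eq}^H$ as a Frobenius-norm contraction, and reassembles via the pseudo-inverse product rule to arrive at $\left\|\mathbf{F}_{\rm RF}\mathbf{H}_{\rm BB}^{\dagger}\right\|_F^2$. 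You instead observe that $\mathbf{F}_{\rm RF}\mathbf{H}_{\rm BB}^{\dagger}$ and $\mathbf{H}_{\rm eq}^{\dagger}$ are both right inverses of $\mathbf{H}_{\rm eq}$ and invoke the minimum-Frobenius-norm characterization of the Moore--Penrose inverse, via the Pythagorean split $\left\|\mathbf{G}\right\|_F^2=\left\|\mathbf{H}_{\rm eq}^{\dagger}\right\|_F^2+\left\|\mathbf{Z}\right\|_F^2$ with the columns of $\mathbf{Z}$ in $\ker(\mathbf{H}_{\rm eq})$ and those of $\mathbf{H}_{\rm eq}^{\dagger}$ in its orthogonal complement. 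Your argument is more elementary and more self-contained: it needs only that $\mathbf{H}_{\rm BB}=\mathbf{H}_{\rm eq}\mathbf{F}_{\rm RF}$ has full row rank (a generic condition you correctly flag), rather than routing the rank condition through the ideal-case representation $\mathbf{F}_{\rm RF}\mathbf{T}=\mathbf{V}_{\rm eq}$, so the norm inequality holds for every realization of the HAC output, not only in the ideal case; it also yields the cleaner equality condition $\mathbf{F}_{\rm RF}\mathbf{H}_{\rm BB}^{\dagger}=\mathbf{H}_{\rm eq}^{\dagger}$, versus the paper's condition that $\mathbf{V}_{\rm eq}$ match the left-singular matrix of $\mathbf{F}_{\rm RF}\mathbf{H}_{\rm BB}^{\dagger}$. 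One caveat, shared with the paper rather than specific to you: the assertion that the LS scheme ``can do no better'' than the full-digital ZF rate is heuristic---since ZF is not capacity-optimal, an imperfect approximation of $\mathbf{F}_{\rm opt}$ could in principle exceed it on particular realizations---but your level of rigor here exactly matches Appendix B, which likewise derives $R_{\rm LS}$ only under the ideal column-space condition.
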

\begin{proof}
Please refer to the APPENDIX B.
\end{proof}

In the proposed HAC-based FHP scheme, the optimal hybrid procoding is decomposed into $2K$ components, with each RF chain corresponding to one of them. In order to reduce the performance loss caused by insufficient RF chains, the similar components in the optimal hybrid precoder are clustered based on the relevance among RF chains. And the clustered components are represented by a clustering center corresponding to a new RF chain design, which can efficiently reflect the performance of the overall cluster.

The HAC-based FHP algorithm is obviously convergent, since two clusters are merged in each iteration until $N_{\rm{RF}}$ clusters are obtained. The computational complexity of the algorithm is mainly caused by step 6 and step 13. In step 6, we need to calculate $K(2K-1)$ inner products with the complexity of $o(N_{\rm{t}})$ for each, which totals $o(K^2N_{\rm{t}})$. In step 13, we need to calculate SVDs for $N_{\rm{RF}}$ matrices in the size of $N_{\rm{t}} \times K$, which totals $o({N_{\rm{RF}}}(KN^2_{\rm{t}}+K^3))$ \cite{SVD}. Since $N_{\rm{t}} \gg K$ in massive MIMO systems, the overall complexity is $o({N_{\rm{RF}}}KN^2_{\rm{t}})$.





  \emph{Remark}: Actually, for the problem (18), the K-means algorithm is also operable \cite{KM}. However, Due to the non-convexity of the problem and the fact that the number of expected clusters is even more than half the number of samples, the performance of K-means algorithm is greatly affected by the initial value. Thus, it is not advisable to solve (18) with the K-means algorithm.

\section{Clustering Based Adaptively-Connected Hybrid Precoding}
In this section, we first analyze the constraints of the RF precoder for AHP, and reformulate the hybrid precoder design as a clustering problem. Then, we propose the MKM-based AHP scheme with the iterative refinement of the clustering and the center updating. Particularly, the clustering centers are updated by the AO-based algorithm with a specific initialization scheme to reduce the computational complexity, which is also capable to provide SHP design independently. Essentially, the MKM-based AHP scheme aims at exploring the relevance among antennas and merging the similar antennas into the same cluster to make full use of the flexible connection.

\subsection{Problem Derivation for AHP}
The major characteristic of AHP is that each RF chain is connected with a flexible subset of the antennas. Accordingly, the hardware constraint in (13) can be expressed as
  \begin{subequations}
  \begin{equation}
  \sum\nolimits_{n = 1}^{{N_{{\rm{RF}}}}} {\left| {{{\bf{F}}_{{\rm{RF}}}}(i,n)} \right|}  = 1,\forall i,
  \end{equation}
  \begin{equation}
  \sum\nolimits_{i = 1}^{{N_{\rm{t}}}} {\left| {{{\bf{F}}_{{\rm{RF}}}}(i,n)} \right|}  = M,\forall n.
  \end{equation}
  \end{subequations}
The constraint (24a) results from that each antenna is supported by an individual RF chain, while the constraint (24b) results from that each RF chain supports $M=N_{\rm{t}}/N_{\rm{RF}}$ (commonly assumed as an integer \cite{AHPmu,AHPmu2,MYpimrc}) antennas. $\mathbf{F}_{{\rm{RF}}}(i,n) \neq 0$ represents that the $i$-th antenna is supported by the $n$-th RF chain via an APS. Owing to the adaptive connection, the position of nonzero entries in analog precoding matrix is flexible.

Utilizing the SVD method, the objective function in (13) can be written as
  \begin{equation}
\begin{array}{*{20}{c}}
{\mathop {\min }\limits_{{{\bf{F}}_{{\rm{RF}}}},{{\bf{F}}_{{\rm{BB}}}}} }&{\left\| {{{\bf{U}}_{{\rm{opt}}}}{{\bf{\Sigma }}_{{\rm{opt}}}}{\bf{V}}_{{\rm{opt}}}^H - {{\bf{F}}_{{\rm{RF}}}}{{\bf{U}}_{{\rm{BB}}}}{{\bf{\Sigma }}_{{\rm{BB}}}}{\bf{V}}_{{\rm{BB}}}^H} \right\|_F^2},
\end{array}
  \end{equation}
where ${{\bf{\Sigma }}_{{\rm{opt}}}}\in {\mathbb{C}^{{N_{{\rm{t}}}} \times K}}$ and ${{\bf{\Sigma }}_{{\rm{BB}}}}\in {\mathbb{C}^{{N_{{\rm{RF}}}} \times K}}$ denote the singular value matrix of ${{\bf{F}}_{{\rm{opt}}}}$ and ${{\bf{F}}_{{\rm{BB}}}}$, respectively. Since the special characteristic of AHP keeps the equation ${\bf{F}}_{{\rm{RF}}}^H{{\bf{F}}_{{\rm{RF}}}} = M{{\bf{I}}_{{N_{{\rm{RF}}}}}}$ always established, the singular values of hybrid procoding matrix are mainly determined by the baseband precoder. Thus, the singular value matrix and the right-singular matrix of ${{\bf{F}}_{{\rm{BB}}}}$ can be unified by ${{\bf{F}}_{{\rm{opt}}}}$ as
  \begin{eqnarray}
   \begin{aligned}
      {\bf{\Sigma }}_{{\rm{BB}}}^{[K:]} &= {\bf{\Sigma }}_{{\rm{opt}}}^{[K:]}/\sqrt M ,\\
      {{\bf{V}}_{{\rm{BB}}}} &= {{\bf{V}}_{\rm{opt}}}.
   \end{aligned}
  \end{eqnarray}
Since ${{\bf{F}}_{{\rm{opt}}}}$ has no more than $K$ singular values, the problem (13) can be simplified as a semi-unitary matrix factorization problem, i.e.,
  \begin{equation}
\begin{array}{*{20}{c}}
{\mathop {\min }\limits_{{{\bf{F}}_{{\rm{RF}}}},{{\bf{U}}_{{\rm{BB}}}}} }&{\left\| {\sqrt M {\bf{U}}_{{\rm{opt}}}^{[:K]} - {{\bf{F}}_{{\rm{RF}}}}{\bf{U}}_{{\rm{BB}}}^{[:K]}} \right\|_F^2}\\
{{\rm{s}}.{\rm{t}}.}&{\left\{ {\begin{array}{*{20}{l}}
{(24\rm{a}),(24\rm{b})}\\
{{{\bf{U}}^H_{{\rm{BB}}}}{{\bf{U}}_{{\rm{BB}}}} = {{\bf{U}}_{{\rm{BB}}}}{{\bf{U}}^H_{{\rm{BB}}}} = {{\bf{I}}_{{N_{{\rm{RF}}}}}}.}
\end{array}} \right.}
\end{array}
  \end{equation}

With the constraint (24a), the rows in ${{\bf{F}}_{{\rm{RF}}}}{\bf{U}}_{{\rm{BB}}}^{[:K]}$ are rotated by the rows in ${\bf{U}}_{{\rm{BB}}}^{[:K]}$ with nonzero ${{\bf{F}}_{{\rm{RF}}}}(m,n)$. Further according to $\left\| {\bf{A}} \right\|_F^2 = \sum\nolimits_m {\left| {{\bf{A}}(m,:)} \right|_2^2} $, the problem (27) is
equivalent to minimize
  \begin{equation}
\sum\limits_{n = 1}^{{N_{{\rm{RF}}}}} {\sum\limits_{m \in {\Gamma _n}} {\left| {\sqrt M {\bf{U}}_{{\rm{opt}}}^{[:K]}(m,:) - {{\bf{F}}_{{\rm{RF}}}}(m,n){\bf{U}}_{{\rm{BB}}}^{[:K]}(n,:)} \right|_2^2} },
  \end{equation}
with the constraint (24) rewritten by
  \begin{equation}
\begin{array}{c}
{\Gamma _n} \cap {\Gamma _m} = \emptyset ,\forall n \ne m,\\
\bigcup\limits_{n=1}^{{N_{{\rm{RF}}}}} {{\Gamma _n}}  = \{ 1,2,...,{N_{\rm{t}}}\} ,\\
\left| {{\Gamma _n}} \right| = M,\forall n,
\end{array}
  \end{equation}
where ${{\Gamma _n}}$ is the index set of nonzero entries in ${{{\bf{F}}_{{\rm{RF}}}}(:,n)}$. Intuitively, the equivalent problem can be regarded as an atypical clustering problem. Specifically, the set of data samples is consisted by the rows of $\sqrt M {\bf{U}}_{{\rm{opt}}}^{[:K]}$ which are expected to be clustered into $N_{\rm{RF}}$ clusters. For the $n$-th cluster, ${\bf{U}}_{{\rm{BB}}}^{[:K]}(n,:)$ denotes the clustering center, while ${{\Gamma _n}}$ denotes the index set of members.

The clustering problem (28) can be solved based on K-means algorithm. The key approach of the K-means algorithm is the iterative refinement of the clustering and the center updating with the thought of greed \cite{KM}. However, due to the extra unitary constraint and (29), the classical K-means algorithm is no longer applicable, which means modification is imperative.

\subsection{Clustering and AO-based Center Updating}
With fixed clustering centers ${\bf{U}}_{{\rm{BB}}}^{[:K]}$, the clustering part of the iterative refinement is to assign each sample to the nearest cluster. The key of this part is to define the distance between samples and clustering centers.

Since clustering centers can be arbitrarily rotated by the nonzero entries in ${{\bf{F}}_{{\rm{RF}}}}$, the Euclidean distance cannot be directly used for distance definition. Further considering that if $\sqrt M {\bf{U}}_{{\rm{opt}}}^{[:K]}(m,:)$ is assigned to the cluster centered by ${\bf{U}}_{{\rm{BB}}}^{[:K]}(n,:)$, ${{\bf{F}}_{{\rm{RF}}}}(m,n)$ should be nonzero (i.e., $m \in {\Gamma _n}$) and determined by the following problem
  \begin{equation}
\begin{array}{*{20}{c}}
{\mathop {\min }\limits_{{{\bf{F}}_{{\rm{RF}}}}(m,n)} }&{\left| {\sqrt M {\bf{U}}_{{\rm{opt}}}^{[:K]}(m,:) - {{\bf{F}}_{{\rm{RF}}}}(m,n){\bf{U}}_{{\rm{BB}}}^{[:K]}(n,:)} \right|_2^2}\\
{{\rm{s}}.{\rm{t}}.}&{\left| {{{\bf{F}}_{{\rm{RF}}}}(m,n)} \right| = 1.}
\end{array}
  \end{equation}
And there exists a closed-form solution for (30) given by
  \begin{equation}
{{\bf{F}}_{{\rm{RF}}}}(m,n) = {e^{j\arg \{ {\bf{U}}_{{\rm{opt}}}^{[:K]}(m,:){\bf{U}}_{{\rm{BB}}}^{[:K]}{{(n,:)}^H }\} }}.
  \end{equation}
Thus, we define the distance between sample vector ${{\bf{s}}}$ and clustering center ${{\bf{c}}}$ as
  \begin{equation}
{\cal D}_{\rm{A}}({\bf{s}},{\bf{c}}) = \left| {{\bf{s}} - {e^{j\arg \{{\bf{s}}{{\bf{c}}^H}\}}}{\bf{c}}} \right|_2^2.
  \end{equation}


The other part of the iterative refinement is to update clustering centers with fixed ${{\Gamma _n}}$. The fixed nonzero positions of ${{\bf{F}}_{{\rm{RF}}}}$ mean the connection between antennas and RF chains is determined, which degenerates AHP into SHP. Mathematically, the objective function in (27) can be rewritten as
  \begin{equation}
\begin{array}{*{20}{c}}
{\mathop {\min }\limits_{{{\bf{F}}_{{\rm{RF}}}},{{\bf{U}}_{{\rm{BB}}}}} }&{\left\| {\sqrt M {\bf{RU}}_{{\rm{opt}}}^{[:K]} - {\bf{R}}{{\bf{F}}_{{\rm{RF}}}}{\bf{U}}_{{\rm{BB}}}^{[:K]}} \right\|_F^2,}
\end{array}
  \end{equation}
where ${\bf{R}}$ is obtained by rearranging the rows of ${{\bf{I}}_{{N_{\rm{t}}}}}$ to make the equivalent analog precoding ${\widetilde{{\bf{F}}}_{{\rm{RF}}}}={{\bf{R}}{{\bf{F}}_{{\rm{RF}}}}}$ a block dialog matrix, i.e.,
  \begin{equation}
{\widetilde{{\bf{F}}}_{{\rm{RF}}}} = \left[ {\begin{array}{*{20}{c}}
{{\widetilde{{\bf{f}}}_1}}& \cdots &{\bf{0}}\\
 \vdots & \ddots & \vdots \\
{\bf{0}}& \cdots &{{\widetilde{{\bf{f}}}_{{N_{{\rm{RF}}}}}}}
\end{array}} \right],
  \end{equation}
where ${{\widetilde{\bf{f}}}_n} \in {\mathbb{C}^{M \times 1}}$. Actually, the block dialog analog precoding matrix is the key characteristic of SHP.

 \begin{algorithm}[!t]
\caption{Proposed AO-based clustering center updating algorithm}
\begin{algorithmic}[1]
\REQUIRE The left-singular matrix of the full digital precoder ${\bf{U}}_{\rm{opt}}$, the index set of nonzero entries ${{\Gamma _n}}$.
\ENSURE ${{{\bf{F}}}_{\rm{RF}}}$, ${\bf{U}}_{\rm{BB}}$.
\STATE Calculate ${\bf{R}}$ based on ${{\Gamma _n}}$ to make $\widetilde{{\bf{F}}}_{{\rm{RF}}}$ block dialog.
\STATE Initialize $\widetilde{{\bf{F}}}_{{\rm{RF}}}$ based on (40).
\STATE Initialize the value of objective function as infinite $v_{0}=+\infty$, and set $k=1$.
\REPEAT
\STATE Calculate the SVD $\widetilde{{\bf{F}}}_{{\rm{RF}}}^H \widetilde{{\bf{U}}}_{{\rm{opt}}}^{[:K]}={{\bf{U}}_{\rm{B}}}{{\bf{\Sigma }}_{\rm{B}}}{\bf{V}}_{\rm{B}}^H$.
\STATE Obtain ${\bf{U}}_{{\rm{BB}}}^{[:K]}$ based on (38).
\STATE Update the equivalent analog precoding matrix ${\widetilde{{\bf{F}}}_{\rm{RF}}}$ based on ${\widetilde{{\bf{f}}}_n}(m) = {e^{j\arg \{ \widetilde{{\bf{U}}}_{{\rm{opt}}}^{[:K]}(t,:){\bf{U}}_{{\rm{BB}}}^{[:K]}{{(n,:)}^H }\} }}$.
\STATE $k=k+1$.
\STATE $v_{k}= {\left\| {\sqrt M \widetilde{{\bf{U}}}_{{\rm{opt}}}^{[:K]} - {\widetilde{{\bf{F}}}_{{\rm{RF}}}}{\bf{U}}_{{\rm{BB}}}^{[:K]}} \right\|_F^2} $.
\UNTIL {${v_{k - 1}} - {v_k} < \varepsilon $}
\STATE ${{{\bf{F}}}_{\rm{RF}}} = {\bf{R}}^{-1}{\widetilde{{\bf{F}}}_{\rm{RF}}}$.
\end{algorithmic}
\end{algorithm}

\vspace{2mm}
Since (33) is a joint optimization problem with non-convex constraints, we propose to utilize the AO method to find the near optimal solution. Specifically, one part of the AO-based method is to optimize ${\widetilde{{\bf{F}}}_{{\rm{RF}}}}$ with fixed ${{\bf{U}}_{{\rm{BB}}}}$ based on
  \begin{equation}
  \begin{array}{*{20}{c}}
  {\mathop {\min }\limits_{{\widetilde{{\bf{F}}}_{{\rm{RF}}}}} }&{\left\| {\sqrt M \widetilde{{\bf{U}}}_{{\rm{opt}}}^{[:K]} - {\widetilde{{\bf{F}}}_{{\rm{RF}}}}{\bf{U}}_{{\rm{BB}}}^{[:K]}} \right\|_F^2}\\
  {{\rm{s}}.{\rm{t}}.}&{\left| {{{\widetilde{\bf{f}}}_n}(m)} \right| = 1},
  \end{array}
  \end{equation}
where $\widetilde{{\bf{U}}}_{{\rm{opt}}}^{[:K]} = {{\bf{R}}}{{\bf{U}}}_{{\rm{opt}}}^{[:K]}$. Observing that ${\widetilde{{\bf{F}}}_{{\rm{RF}}}}{\bf{U}}_{{\rm{BB}}}^{[:K]}$ is formed from the rotated rows of ${\bf{U}}_{{\rm{BB}}}^{[:K]}$, we decompose the approximation into $N_{\rm{t}}$ subproblems as
  \begin{equation}
\begin{array}{*{20}{c}}
{\mathop {\min }\limits_{{\widetilde{{\bf{f}}}_n}(m)} }&{\left\| {\sqrt M \widetilde{{\bf{U}}}_{{\rm{opt}}}^{[:K]}(t,:) - {\widetilde{{\bf{f}}}_n}(m){\bf{U}}_{{\rm{BB}}}^{[:K]}(n,:)} \right\|_F^2}\\
{{\rm{s}}.{\rm{t}}.}&{\left| {{\widetilde{{\bf{f}}}_n}(m)} \right| = 1,}
\end{array}
  \end{equation}
where $t={Mn - M + m}$. The problem (36) is in the same form as (30), which can be
solved by ${\widetilde{{\bf{f}}}_n}(m) = {e^{j\arg \{ \widetilde{{\bf{U}}}_{{\rm{opt}}}^{[:K]}(t,:){\bf{U}}_{{\rm{BB}}}^{[:K]}{{(n,:)}^H }\} }}$.

The other part of the AO method is to optimize ${\bf{U}}_{{\rm{BB}}}$ with fixed $\widetilde{{\bf{F}}}_{{\rm{RF}}}$ based on
  \begin{equation}
  \begin{array}{*{20}{c}}
  {\mathop {\min }\limits_{{{\bf{U}}_{{\rm{BB}}}}} }&{\left\| {\sqrt M \widetilde{{\bf{U}}}_{{\rm{opt}}}^{[:K]} - {\widetilde{{\bf{F}}}_{{\rm{RF}}}}{\bf{U}}_{{\rm{BB}}}^{[:K]}} \right\|_F^2}\\
  {{\rm{s}}.{\rm{t}}.}&{{{{\bf{U}}^H_{{\rm{BB}}}}{{\bf{U}}_{{\rm{BB}}}} = {{\bf{U}}_{{\rm{BB}}}}{{\bf{U}}^H_{{\rm{BB}}}} = {{\bf{I}}_{{N_{{\rm{RF}}}}}}}}.
  \end{array}
  \end{equation}
With only the first $K$ columns of ${\bf{U}}_{{\rm{BB}}}$ considered in the objective function, the problem (37) is a typical semi-orthogonal procrustes problem \cite{OPP}, which can be solved by
  \begin{equation}
  {\bf{U}}_{{\rm{BB}}}^{[:K]} = {{\bf{U}}_{\rm{B}}}{{\bf{I}}_{{N_{{\rm{RF}}}}}^{[:K]}}{\bf{V}}_{\rm{B}}^H,
  \end{equation}
where ${{\bf{U}}_{\rm{B}}}{{\bf{\Sigma }}_{\rm{B}}}{\bf{V}}_{\rm{B}}^H = \widetilde{{\bf{F}}}_{{\rm{RF}}}^H \widetilde{{\bf{U}}}_{{\rm{opt}}}^{[:K]}$.

Generally, the AO method is initialized with a random optimization object. In (33), it could be $\widetilde{{\bf{F}}}_{{\rm{RF}}}$ with random phases or a random unitary ${\bf{U}}_{{\rm{BB}}}$. Since the optimization object gradually approaches the near-optimal in each iteration, the initialization scheme is closely related to the convergence time of the algorithm. Thus, we provide a simple scheme for the initialization of $\widetilde{{\bf{F}}}_{{\rm{RF}}}$. Regardless of the unitary constraint, the problem (33) can be decomposed into $N_{\rm{RF}}$ subproblems as
  \begin{equation}
  \begin{array}{*{20}{c}}
{\mathop {\min }\limits_{{\widetilde{{\bf{f}}}_n}} }&{\left\| {\sqrt M \widetilde{{\bf{U}}}_{{\rm{opt,}}n}^{[:K]} - {\widetilde{{\bf{f}}}_n}{\bf{U}}_{{\rm{BB}}}^{[:K]}(n,:)} \right\|_F^2}\\
{{\rm{s}}.{\rm{t}}.}&{\left| {{\widetilde{{\bf{f}}}_n}(m)} \right| = 1,}
\end{array}
  \end{equation}
where ${\widetilde{{\bf{U}}}_{{\rm{opt,}}n}^{[:K]}}\in {\mathbb{C}^{{M} \times K}}$ denotes the submatrix of $\widetilde{{\bf{U}}}_{{\rm{opt}}}^{[:K]} = {[{(\widetilde{{\bf{U}}}_{{\rm{opt,1}}}^{[:K]})^H},...,{(\widetilde{{\bf{U}}}_{{\rm{opt,}}{N_{{\rm{RF}}}}}^{[:K]})^H}]^H}$. The subproblem (39) is in the same form as (21), which suggests to initial $\widetilde{{\bf{F}}}_{{\rm{RF}}}$ with
  \begin{equation}
{\widetilde{{\bf{f}}}_{n,{\mathop{\rm int}} }}(m) = {e^{j\arg \{ {{\bf{U}}_n}(m,1)\} }},
  \end{equation}
where ${{\bf{U}}_n}{{\bf{\Sigma }}_n}{\bf{V}}_n^H = \widetilde{{\bf{U}}}_{{\rm{opt,}}n}^{[:K]}$.

As a summary, the AO-based scheme for updating clustering centers is shown in Algorithm 2. At the beginning of the scheme, $\widetilde{{\bf{F}}}_{{\rm{RF}}}$ is initialized by (40) to reduce the computational complexity, which will be further discussed in Section V. From step 4 to step 10, we alternately optimize $\widetilde{{\bf{F}}}_{{\rm{RF}}}$ and ${\bf{U}}_{{\rm{BB}}}$ until the difference between the objective function values in adjacent loops is little enough.

Since the objective function in (33) is successively minimized in step 6 and step 7, which is further lower bounded by zero, Algorithm 2 shall converge to a feasible solution. Although Algorithm 2 cannot guarantee the optimal solution due to the non-convexity of the problem, it can still provide a near optimal solution as shown in the simulation results in Section V. The computational complexity of algorithm 2 is mainly caused by the calculation of the SVD in step 5, where the dimension of the matrix is $N_{\rm{RF}} \times K$. Assuming that the number of iterations is $K_{\rm{iter}}$, the overall complexity can be given by $o(K_{\rm{iter}}(KN^2_{\rm{RF}}+K^3))$.

\emph{Remark}: The block diagonal ${\widetilde{{\bf{F}}}_{{\rm{RF}}}}$ implies that the AO-based method can be adopted to provide SHP design. Specifically, in Algorithm 2, it only needs to obtain ${\bf{U}}_{\rm{opt}}$ by SVD method at the beginning, and refine ${\bf{F}}_{\rm{BB}}$ by classical digital precoding method at the end. Essentially, SHP is a special case of AHP, where the fixed connection pattern indicates the relevance among antennas is not considered for RF chain design. Thus, with efficient RF chain design, it is common for AHP to provide advanced precoding performance than SHP, which will be further clarified in Section V.

\subsection{Proposed MKM-based AHP Scheme}

With the iteration of the two parts considered above, we propose the MKM-based AHP scheme in Algorithm 3. Initialized with a random ${\bf{U}}_{{\rm{BB}}}^{[:K]}$, the clustering is performed from step 4 to step 16. Due to the constraint (29), it is worth mentioning that greedy clustering each sample into the nearest cluster is not advisable, since it will lead to inconsistent numbers of members in different clusters. For the fairness among RF chains, which is similarly considered in \cite{AHPmu,MYpimrc}, we assign members to different clusters in turn from step 8 to step 11 as an inner loop. And the inner loop will be repeated $M$ times for the complete design. Then, the clustering centers are updated based on Algorithm 2 for further iteration, which will be terminated when the difference between the objective function values in adjacent loops reaches a certain threshold. At the end of the scheme, the baseband precoder is further refined by the classical digital precoding algorithms. Actually, similar to the proposed FHP scheme, the LS method is also operable for baseband precoder design in AHP. However, since the degree of freedom for RF precoder design significantly decreases in AHP, it is inadvisable to approximate the column space of ${\bf{F}}_{{\rm{RF}}}$ to that of ${\bf{F}}_{{\rm{opt}}}$ with LS method. Thus, the classical digital precoding is proposed.

In the proposed MKM-based AHP scheme, the optimal digital procoding is decomposed into $N_{\rm{t}}$ components, with each antenna corresponding to one of them. Based on the relevance among antennas, the similar components are merged into one cluster, which implies the similar antennas will be supported by the same RF chain.

\begin{algorithm}[!t]
\caption{Proposed MKM-based AHP algorithm}
\begin{algorithmic}[1]
\REQUIRE Full digital precoder ${\bf{F}}_{\rm{opt}}$, the number of RF chains $N_{\rm{RF}}$.
\ENSURE ${\bf{F}}_{\rm{RF}}$, ${\bf{F}}_{\rm{BB}}$.
\STATE Calculate the SVD ${\bf{F}}_{\rm{opt}}={{\bf{U}}_{{\rm{opt}}}}{{\bf{\Sigma }}_{{\rm{opt}}}}{\bf{V}}_{{\rm{opt}}}^H$.
\STATE Initialize ${\bf{U}}_{{\rm{BB}}}^{[:K]}$ as a random semi-unitary matrix.
\STATE Initialize the value of objective function as infinite $v_{0}=+\infty$, and set $k=1$.
\REPEAT
\STATE For each pair of $\sqrt M{\bf{U}}_{{\rm{opt}}}^{[:K]}(m,:)$ and ${\bf{U}}_{{\rm{BB}}}^{[:K]}(n,:)$, calculate the distance based on (32).
\STATE ${\bf{F}}_{\rm{RF}}={\bf{0}}_{{N_{\rm{t}}}\times {N_{\rm{RF}}}}$, $\Gamma  = \{ 1,2,...,{N_{\rm{t}}}\} $, ${\Gamma _n} = \emptyset ,\forall n$.
\FOR {$p = 1$ to $M$}
\FOR {$q = 1$ to ${N_{\rm{RF}}}$}
\STATE ${i^ \star }\mathop { = \arg \min }\limits_{i \in \Gamma } {{\cal D}_{\rm{A}}}(\sqrt M {\bf{U}}_{{\rm{opt}}}^{[:K]}(i,:),{\bf{U}}_{{\rm{BB}}}^{[:K]}(q,:))$.
\STATE ${\Gamma _q} = {\Gamma _q}  \cup  \{ {i^ \star }\} $, $\Gamma  = \Gamma \backslash \{ {i^ \star }\} $.
\ENDFOR
\ENDFOR
\STATE Fixed ${\Gamma _n}$, refine ${\bf{U}}_{{\rm{opt}}}^{[:K]}$ and ${\bf{F}}_{\rm{RF}}$ by Algorithm 2.
\STATE $k=k+1$.
\STATE $v_{k}= {\left\| {\sqrt M {\bf{U}}_{{\rm{opt}}}^{[:K]} - {{\bf{F}}_{{\rm{RF}}}}{\bf{U}}_{{\rm{BB}}}^{[:K]}} \right\|_F^2} $.
\UNTIL ${v_{k - 1}} - {v_k} < \varepsilon $
\STATE Obtain ${\bf{F}}_{\rm{BB}}$ by the classical digital precoding method (15) with the effective baseband channel.
\end{algorithmic}
\end{algorithm}

The convergence of Algorithm 3 can be similarly clarified as Algorithm 2, since it refines the objective function value in the iterative refinement. The computational complexity of Algorithm 3 is mainly caused by the repetition of the Algorithm 2. With $K_{\rm{out}}$ outer iterations\footnote{The average number of outer iterations for convergence in numerical simulation of Section V is approximately $K_{\rm{out}}=6$.}, the overall complexity can be similarly given by $o(K_{\rm{out}}K_{\rm{iter}}(KN^2_{\rm{RF}}+K^3))$.


\emph{Remark}: The HAC method is not advisable to solve (28). Since the key approach of HAC is merging clusters until the desired number of clusters is obtained, it is of high possibility to merge two large clusters in the last few loops, which makes the clustering result improbable to satisfy the constraint (29).

\section{Simulation}
In this section, we present the numerical results based on Monte Carlo simulations to evaluate the performance of proposed HAC-based FHP and MKM-based AHP schemes. In addition, the proposed SHP design is provided by the AO-based algorithm in Section IV-B for more details of performance comparison. In the simulation system, the BS is equipped with $8 \times 8$ antenna array to serve $8$ users, while each user is equipped with $2 \times 2$ antenna array. The antenna elements are separated by $d=\lambda / 2$ in UPA structure. The mmWave channel between the BS and each user consists of $N_{\rm{c},k}=5$ scattering clusters, each of which contains $N_{\rm{p},k}=10$ propagation paths. The azimuth and elevation angles of arrival and departure are uniformly distributed in $[0,2\pi)$ with a 10-degree angular spread. All simulation results are calculated over 1000 channel realizations.

\subsection{System Spectral Efficiency}
In this subsection, we investigate the system spectral efficiency of different hybrid structures achieved by proposed design schemes. The ZF method is utilized to provide optimal full digital precoder as input, and the baseband precoder revision scheme for the last step of each proposed method. In addition, the unquantized APSs are adopted at both combiners and precoders. For further comparison, the FHP scheme in \cite{fuldecp1}, the AHP scheme in \cite{AHPmu} and the SHP scheme in \cite{subdecp} are also presented as benchmarks. Here, to enable the existing schemes, we consider the case that the number of RF chains is equal to that of the users, i.e., $N_{\rm{RF}}=K$. As shown in Fig. 2, the fully-connected structure always provides the best system performance. The proposed HAC-based scheme shows a slight advantage compared with the scheme in \cite{fuldecp1}. As for AHP, the proposed MKM-based scheme achieves much higher spectral efficiency than the scheme in \cite{AHPmu}. The increasing SNR witnesses dramatic expansion of the performance gap. Especially when $\rm{SNR}>0$ $\rm{dB}$, the spectral efficiency gap is over $10$ $\rm{bps/Hz}$. For the sub-connected structure, similar conclusions can be obtained in the comparison of the proposed AO-based scheme and the scheme in \cite{subdecp}. Normally, the adaptively-connected structure shall provide better precoding performance than the sub-connected structure with flexible connection among antennas and RF chains. However, the proposed AO-based SHP scheme even outperforms the AHP scheme in \cite{AHPmu}. This is mainly because the proposed scheme jointly optimizes the baseband and RF precoder, while the existing scheme decouples the optimization problem.

\begin{figure}[!t]
\centering
\includegraphics[width=8cm]{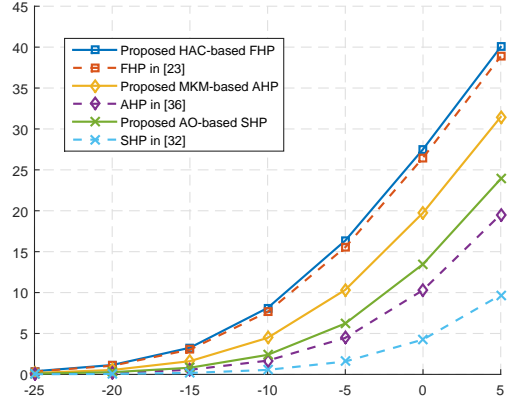}
\caption{Spectral efficiency versus SNR with different precoding schemes when $N_{\rm{t}}=8 \times 8$, $K=N_{\rm{RF}}=8$.} \label{fig.2}
\end{figure}

\begin{figure}[!t]
\centering
\includegraphics[width=8cm]{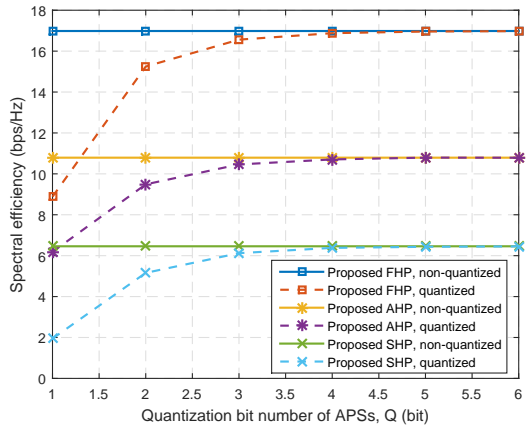}
\caption{Spectral efficiency versus the quantization bit number $Q$ with different proposed precoding schemes when $\rm{SNR}=-5~\rm{dB}$.} \label{fig.3}
\end{figure}

Fig. 3 shows the impact of the quantized APSs on the system spectral efficiency. All three hybrid structures designed by the proposed schemes are considered for the comparison when $\rm{SNR}=-5~\rm{dB}$. By increasing the quantization bits of APSs, the performance of quantized schemes rapidly approaches that of non-quantized schemes owing to the improvement of phase resolution in RF precoder. When $Q=4$, the curves of quantized schemes almost coincide with that of corresponding non-quantized schemes with the gap at about $0.1~\rm{bps/Hz}$, which provides guidance for practical APS design. Moreover, since AHP and SHP share the same number of nonzero entries in the RF precoding matrix ($N_{\rm{t}}$ in total), the impacts of few quantization bits are at almost the same level. However, the fully-connected structure with few quantization bits suffers a more serious performance loss, since all entries in RF precoding matrix are non-zero ($N_{\rm{t}}N_{\rm{RF}}$ in total).

Fig. 4 shows the impact of the array scale at the BS with the comparison $N_{\rm{t}}=\{8\times8,8\times32\}$. The proposed MKM-based AHP scheme is adopted to provide the precoder design. Obviously, growing spectral efficiency can be observed with the enlargement of the antenna array. In addition, Fig. 4 shows the impact of target full digital precoder provided by different digital precoding methods including MF, ZF and RZF. For consistency, at the end of the algorithm, the baseband precoder is revised by the same method as getting the input. Compared with the ZF scheme, the RZF scheme shows slight advantage in low SNR conditions, owing to the consideration of the noise parameter. The increasing SNR weakens the influence of the noise parameter in (14) and makes the performance gap disappear. The MF scheme provides the poorest performance at high SNR conditions, since it only harvests the channel gain but is incapable to eliminate the inter-user interference.

\begin{figure}[!t]
\centering
\includegraphics[width=8cm]{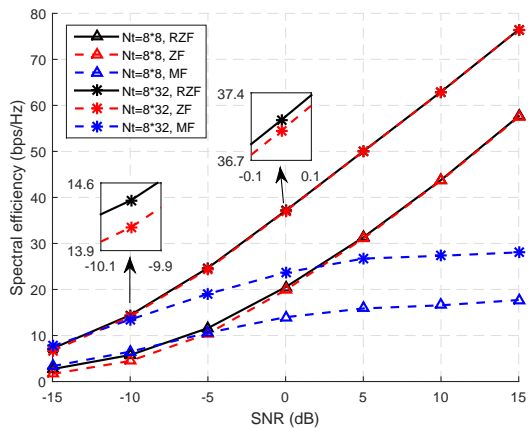}
\caption{Spectral efficiency of proposed MKM-based AHP scheme with different target full digital precoder when $N_{\rm{t}}=\{8\times8,8\times32\}$.} \label{fig.4}
\end{figure}

To clarify the performance of the proposed HAC-based FHP scheme in the cases of insufficient RF chains, in Fig. 5, we present the spectral efficiency versus the number of RF chains when $\rm{SNR}=-10~\rm{dB}$. As proposed in Section III, we adopt the LS and ZF schemes to refine the baseband precoder and analyze the performance difference, respectively. The FHP scheme in \cite{fuldecp1} and the upper bounds in (23) are also presented as the benchmarks. It is observed that the spectral efficiencies of the proposed scheme with the LS and ZF refinement schemes gradually get close to the corresponding upper bounds, which further verifies the tightness especially in the cases of large $N_{\rm{RF}}$. However, the spectral efficiency of the scheme in \cite{fuldecp1} maintains at a constant level. The defect results from the limitation that only $K$ RF chains are effectively used, while other $N_{\rm{RF}}-K$ RF chains keep silence and make no difference to the system performance.

\begin{figure}[!t]
\centering
\includegraphics[width=8cm]{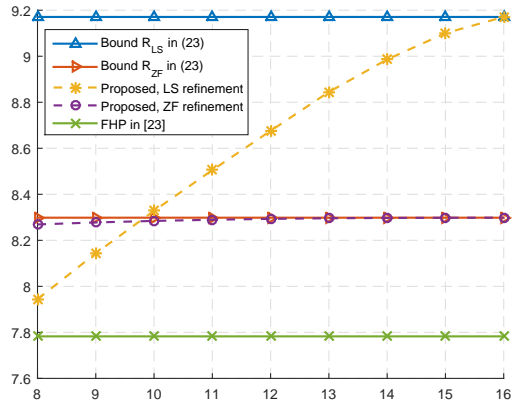}
\caption{Spectral efficiency of proposed HAC-based FHP scheme versus the number of RF chains with different baseband precoder refinement schemes when $\rm{SNR}=-10~\rm{dB}$.} \label{fig.5}
\end{figure}

A more important conclusion from Fig. 5 is that the spectral efficiency of the LS scheme suffers a much sharper decline with the decreasing $N_{\rm{RF}}$ than the ZF scheme. This is because the insufficient RF chains lead to low similarity of column spaces between the proposed RF precoder and the full digital precoder, which reduces the approximation accuracy of the LS method and exacerbates the inter-user interference. As for the ZF scheme, the inter-user interference can always be eliminated, which abates the influence of column space similarity to a certain extent. Further, a threshold of $N_{\rm{RF}}$ for the selection of two alternative refinement schemes can be given based on the intersection point of two performance curves. In the given simulation, the ZF method shall be adopted when $N_{\rm{RF}} \leq 10$, otherwise it shall be the LS method.

\subsection{System Power Efficiency}
In the previous simulation, the FHP shows the best performance in terms of system spectral efficiency. However, the fully-connected structure also leads to the highest power consumption. In this subsection, the power consumption is further taken into account for the comparison of different precoding schemes.

The power consumption of FHP can be given by \cite{AMmethod}
\begin{equation}
{P_{{\rm{ful}}}} = {P_{{\rm{com}}}} + {N_{{\rm{RF}}}}{P_{{\rm{RF}}}} + {N_{\rm{t}}}{P_{{\rm{PA}}}} + {N_{{\rm{RF}}}}{N_{\rm{t}}}{P_{{\rm{APS}}}},
\end{equation}
where ${P_{{\rm{com}}}}$ is the common power of the transmitter, ${P_{{\rm{RF}}}}$, ${P_{{\rm{PA}}}}$, and ${P_{{\rm{APS}}}}$ are the power of a single RF chain, power amplifier, and APS, respectively. For SHP, which only requires ${N_{{\rm{t}}}}$ APSs, the power consumption can be given by
\begin{equation}
{P_{{\rm{sub}}}} = {P_{{\rm{com}}}} + {N_{{\rm{RF}}}}{P_{{\rm{RF}}}} + {N_{\rm{t}}}{P_{{\rm{PA}}}} + {N_{\rm{t}}}{P_{{\rm{APS}}}}.
\end{equation}
As for AHP, it further requires ${N_{{\rm{t}}}}$ switches to implement the adaptive connection network, and incurs power consumption given by
\begin{equation}
{P_{{\rm{adp}}}} = {P_{{\rm{com}}}} + {N_{{\rm{RF}}}}{P_{{\rm{RF}}}} + {N_{\rm{t}}}{P_{{\rm{PA}}}} + {N_{\rm{t}}}({P_{{\rm{APS}}}}+{P_{{\rm{SW}}}}),
\end{equation}
where ${P_{{\rm{SW}}}}$ is the power of a single switch. The power efficiency of a certain precoding design can be defined as the ratio between the spectral efficiency and power consumption, i.e., $\eta  = R/P$.

\begin{figure}[!t]
\centering
\includegraphics[width=8cm]{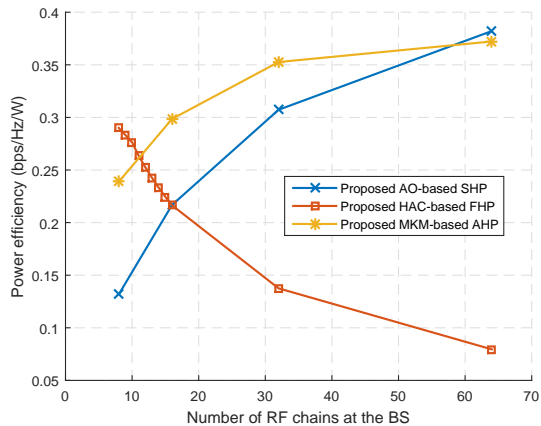}
\caption{Power efficiency versus the number of RF chains with different proposed precoding schemes when $\rm{SNR}=-10~\rm{dB}$.} \label{fig.6}
\end{figure}

In Fig. 6, we present the power efficiency versus the number of RF chains for three hybrid precoding structures with the power parameters set as ${P_{{\rm{com}}}}=10~\rm{W}$, ${P_{{\rm{RF}}}}=100~\rm{mW}$, ${P_{{\rm{PA}}}}=100~\rm{mW}$, ${P_{{\rm{APS}}}}=20~\rm{mW}$, ${P_{{\rm{SW}}}}=10~\rm{mW}$ according to \cite{AMmethod,Power}. For the proposed FHP scheme with optimal baseband precoder refinement, although the spectral efficiency sufficiently approaches that of the optimal full digital precoder with ${N_{{\rm{RF}}}}$ increasing from 8 to 16 in Fig. 5, the performance growth rate does not exceed $20\%$. However, the power consumption increases by $40\%$, which results in a negative impact on the power efficiency. Thus, a downward trend can be observed in Fig. 6. If the RF chains are over equipped, i.e., ${N_{{\rm{RF}}}} > 2K$, the spectral efficiency will be limited by the performance of full digital precoder since the extra RF chains cannot be efficiently utilized. Consequently, the FHP design suffers a dramatic decrease of the power efficiency with the increasing RF chains.

For both the proposed AHP and SHP schemes, in Fig. 6, the curves of power efficiency monotonically increase with ${N_{{\rm{RF}}}}$. The increasing spectral efficiency is one of the reasons for the improvement of power efficiency. More importantly, the almost unchanged power consumption gives rise to the positive impact, since the increasing RF chains will not result in any further requirements of APSs. In the cases of ${N_{{\rm{RF}}}} < {N_{{\rm{t}}}}$, the AHP scheme achieves much higher power efficiency than the SHP scheme. This is mainly because the adaptive connection network significantly increases the degree of freedom for the RF precoder design with several switchers which require low power consumption. At the special point ${N_{{\rm{RF}}}} = {N_{{\rm{t}}}}$ (extremely impractical in massive MIMO systems), the SHP scheme slightly outperforms the AHP scheme. This results from that the RF precoders in AHP and SHP are in the same form when ${N_{{\rm{RF}}}} = {N_{{\rm{t}}}}$, which degenerates the adaptively-connected structure into the sub-connected structure and makes the extra switches unable to provide any flexibility.

\subsection{Initialization Scheme Analysis for AO-based Algorithm}
As mentioned in Section IV-B, we provide a specific initialization scheme for proposed AO-based algorithm. In this subsection, we will show the impact of the initialization scheme in terms of the spectral efficiency and computational complexity. Both AHP and SHP schemes are considered for the comparison.

\begin{figure}[!t]
\captionsetup{belowskip=-10pt}
    \begin{minipage}[t]{0.5\linewidth}
    \centering
    \includegraphics[width=3.8cm]{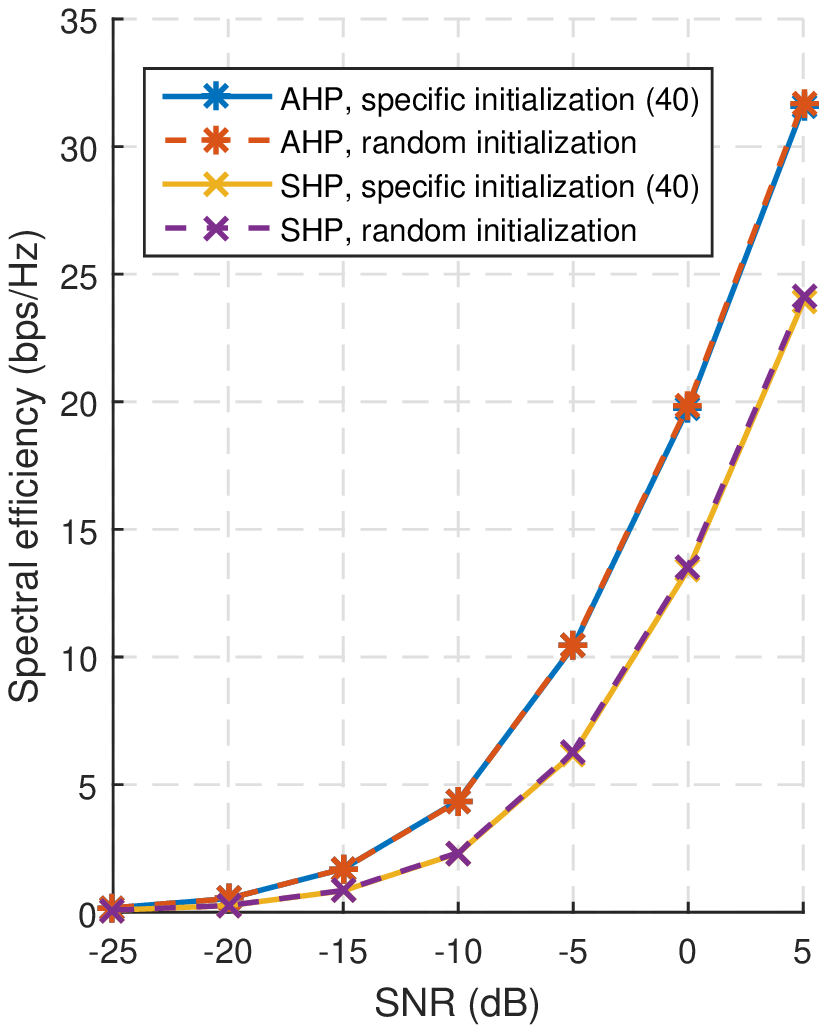}
    \scriptsize {\centerline{(a)}}
    \end{minipage}%
    \begin{minipage}[t]{0.5\linewidth}
    \centering
    \includegraphics[width=3.8cm]{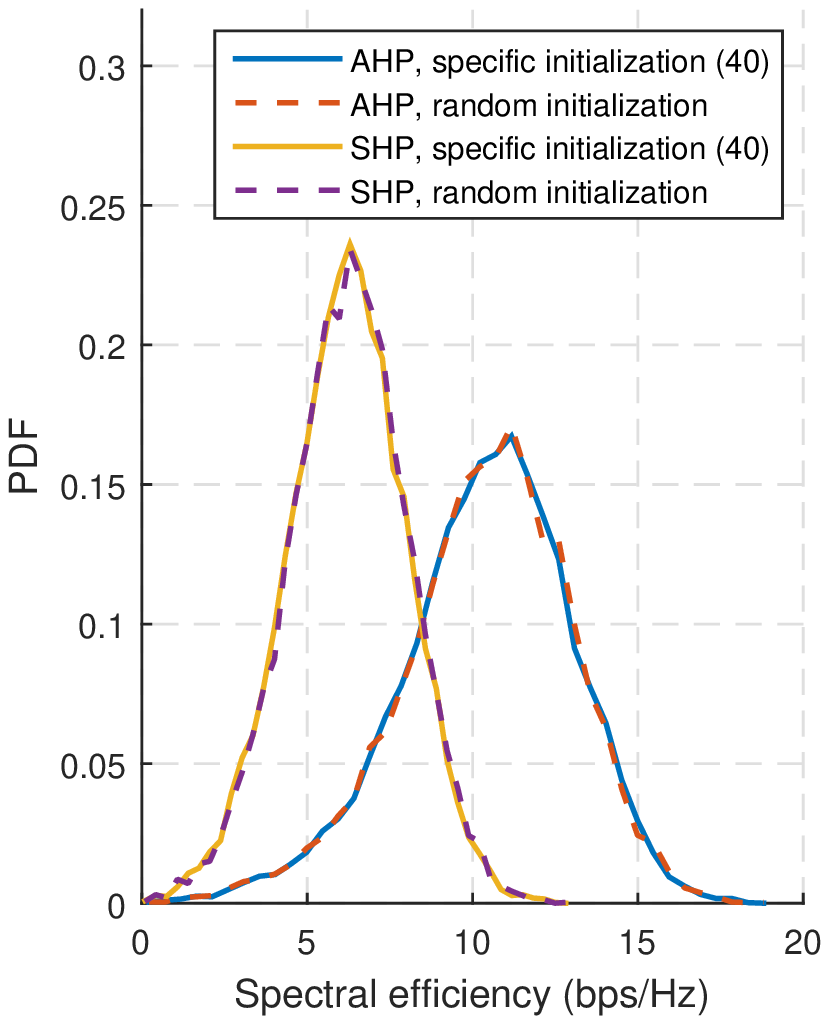}
    \scriptsize {\centerline{(b)}}
    \end{minipage}%
    \caption{(a) Spectral efficiency of proposed SHP and AHP schemes with different initialization schemes; (b) The PDF of spectral efficiency when $\rm{SNR}=-5~\rm{dB}$.}
    \label{fig.7}
 \end{figure}

Firstly, we compare the proposed AHP and SHP schemes with random and specific initialization schemes in terms of the spectral efficiency. The random initialization scheme means the phases in APSs are randomly initialized, which is generally adopted in classical AO algorithm. The specific initialization scheme refers to the proposed initialization scheme in (40). Fig. 7(a) clarifies a common conclusion for AHP and SHP that the different initialization schemes contribute to almost the same average spectral efficiency. For more details, Fig. 7(b) plots the probability density function (PDF) curves of the spectral efficiency for the specific point $\rm{SNR}=-5~\rm{dB}$ in Fig. 7(a). The random and specific initialization schemes provide almost the same regularity of distribution.


\begin{figure}[!t]
\captionsetup{belowskip=-10pt}
    \begin{minipage}[t]{0.5\linewidth}
    \centering
    \includegraphics[width=3.8cm]{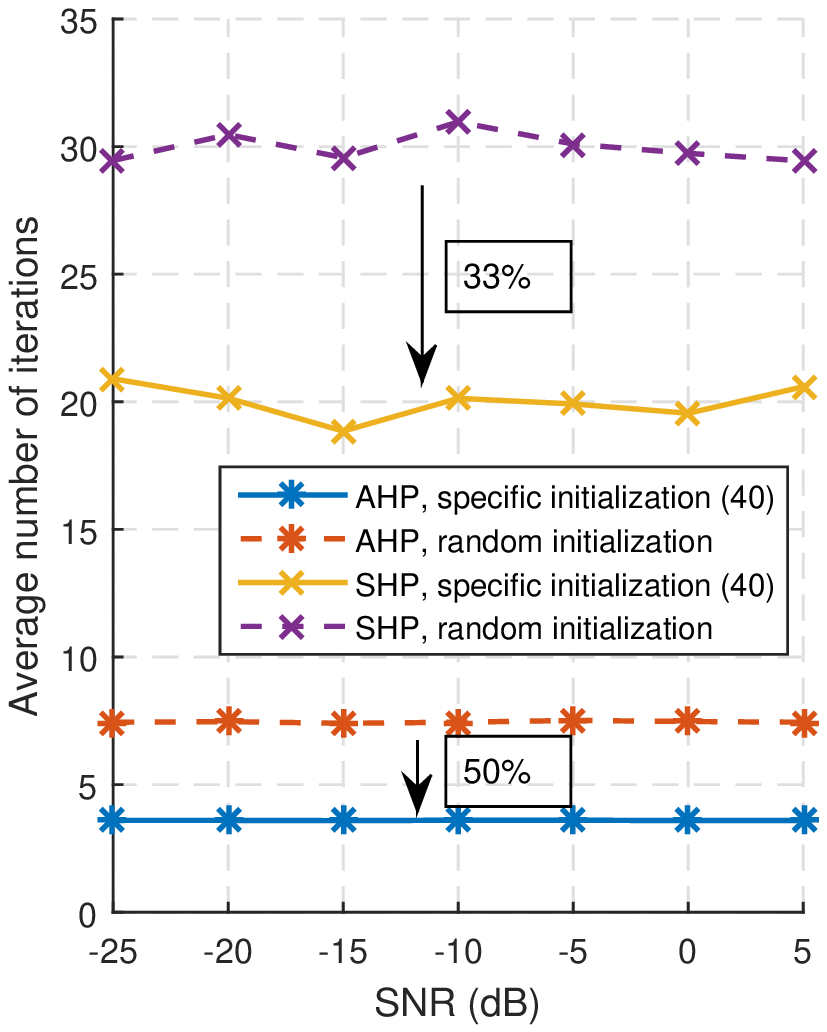}
    \scriptsize {\centerline{(a)}}
    \end{minipage}%
    \begin{minipage}[t]{0.5\linewidth}
    \centering
    \includegraphics[width=3.8cm]{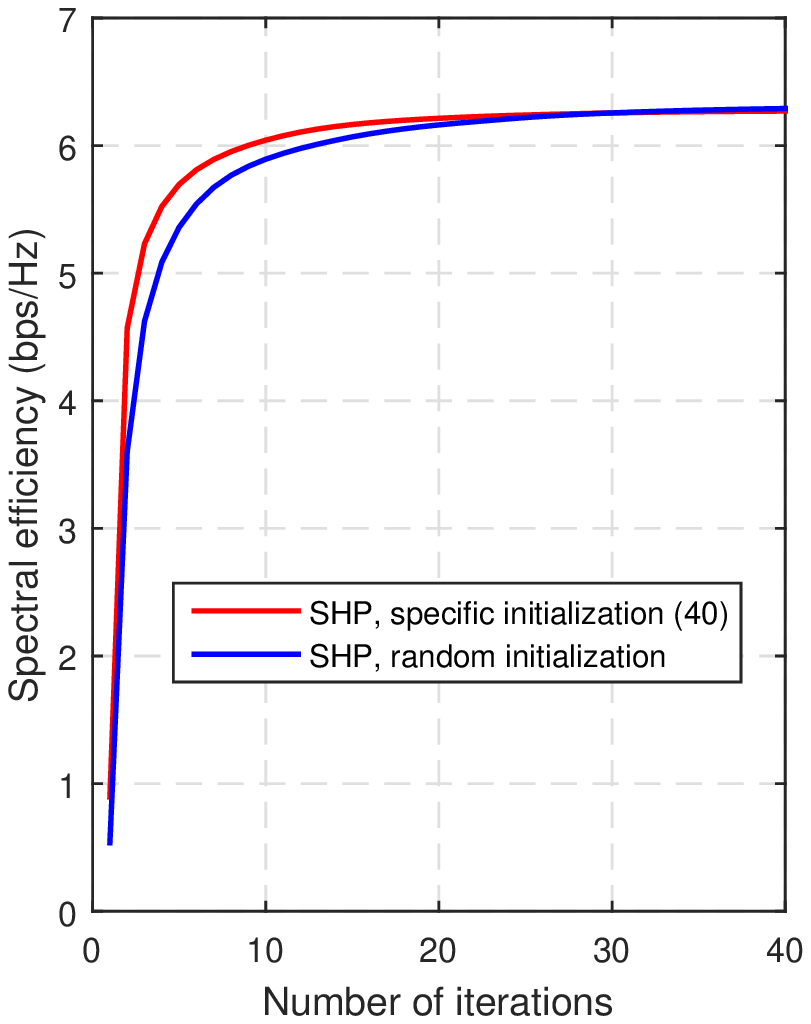}
    \scriptsize {\centerline{(b)}}
    \end{minipage}%
    \caption{(a) Average number of iterations of proposed SHP and AHP schemes versus SNR; (b) The spectral efficiency of the proposed SHP scheme versus the number of iterations when $\rm{SNR}=-5~\rm{dB}$.}
    \label{fig.8}
 \end{figure}


To compare the computational complexity of the algorithms with two alternative initialization schemes, in Fig. 8(a), we count the average number of iterations $K_{\rm{iter}}$ in the AO-based algorithm. For SHP, the proposed specific initialization scheme requires $33\%$ less iterations for convergence than the random initialization scheme. For AHP, the rate of reduction even reaches $50\%$. Further in Fig. 8(b), we present the spectral efficiency of the AO-based SHP scheme versus the number of iterations. It demonstrates that the specific initialization scheme contributes to less convergence time than the random initialization scheme.

\section{Conclusion}
In this paper, we proposed hybrid precoding schemes for fully-connected and adaptively-connected structures in multi-user massive MIMO systems. Innovatively, the thought of clustering in unsupervised learning was adopted. Based on the relevance among RF chains in optimal hybrid procoder, the HAC-based FHP scheme was proposed to provide the design with insufficient RF chains. Based on the relevance among antennas, the MKM-based AHP scheme was proposed to make full use of the flexibility in the structure. Particularly, the AO-based clustering center updating algorithm with a specific initialization scheme in the MKM-based scheme was capable to provide feasible SHP design individually, since SHP can be regarded as a special case of AHP.

Simulation results have illustrated that the HAC-based FHP scheme can achieve close spectral efficiency to the full digital precoder with insufficient RF chains. The MKM-based AHP scheme provides high power efficiency with the full use of the low cost adaptive connection network. With the lowest power consumption, the AO-based SHP scheme effectively utilizes the equipped RF chains to enhance both spectral and energy efficiency. Moreover, all proposed precoding schemes outperform the existing work for corresponding hybrid structure designs. The specific initialization scheme for the AO-based algorithm dramatically reduces the convergence time of AHP and SHP schemes.

Finally, our work has proved the feasibility of clustering method in hybrid precoding design. However, only two representative clustering algorithms (HAC and K-means) are considered in this paper. Actually, there is a great possibility for other clustering algorithms to achieve better performance, which will require further investigation.


\appendices
\section{Proof of the Proposition 1}
With the adjusted RF precoding matrix denoted by $\bar{{\bf{F}}}_{\rm{RF}} \in \mathbb{C}^{N_{{\rm{t}}} \times {\bar{N}}_{{\rm{RF}}}}$, the unchanged column space implies that the columns in $\bar{{\bf{F}}}_{\rm{RF}}$ can be linear represented by the columns in ${{\bf{F}}}_{\rm{RF}}$, i.e., $\bar{{\bf{F}}}_{\rm{RF}}={{\bf{F}}}_{\rm{RF}}{{\bf{A}}}$, where ${{\bf{A}}} \in \mathbb{C}^{N_{{\rm{RF}}} \times {\bar{N}}_{{\rm{RF}}}}$ is an arbitrary matrix.

Further, the basis vectors of the column space of ${{\bf{F}}}_{\rm{RF}}$ and $\bar{{\bf{F}}}_{\rm{RF}}$ can be respectively given by the columns of ${{\bf{U}}^{[:r]}_{\rm{RF}}}$ and ${\bar{\bf{U}}}^{[:r]}_{\rm{RF}}$, where ${{\bf{U}_{\rm{RF}}}{\bf{\Sigma}_{\rm{RF}}}{\bf{V}}^H_{\rm{RF}}}={{\bf{F}}}_{\rm{RF}}$, ${{\bf{\bar U}}_{{\rm{RF}}}}{{\bf{\bar \Sigma }}_{{\rm{RF}}}}{\bf{\bar V}}_{{\rm{RF}}}^H = {{\bf{\bar F}}_{{\rm{RF}}}}$, and $r={{\rm{r}}({{\bf{F}}}_{\rm{RF}})} = {{\rm{r}}({\bar{\bf{F}}}_{\rm{RF}})}$. And the standard bases can be represented by each other with unitary transformation, i.e., ${\bar{\bf{U}}}^{[:r]}_{\rm{RF}}={{\bf{U}}}^{[:r]}_{\rm{RF}}{\bf{\Psi}} $, where ${\bf{\Psi}} \in \mathbb{C}^{r \times r}$ is a unitary matrix.

Accordingly, we have the following derivation
\begin{equation}
  \begin{split}
{{\bf{F}}_{{\rm{RF}}}}{\bf{A}} &= {{{\bf{\bar F}}}_{{\rm{RF}}}}\\
{\bf{U}}_{{\rm{RF}}}^{[:r]}{\bf{\Sigma }}_{{\rm{RF}}}^{[r:]}{\bf{V}}_{{\rm{RF}}}^H{\bf{A}} &= {\bf{\bar U}}_{{\rm{RF}}}^{[:r]}{\bf{\bar \Sigma }}_{{\rm{RF}}}^{[r:]}{\bf{\bar V}}_{{\rm{RF}}}^H\\
{\bf{U}}_{{\rm{RF}}}^{[:r]}{\bf{\Sigma }}_{{\rm{RF}}}^{[r:]}{\bf{V}}_{{\rm{RF}}}^H{\bf{A}} &= {\bf{U}}_{{\rm{RF}}}^{[:r]}{\bf{\Psi \bar \Sigma }}_{{\rm{RF}}}^{[r:]}{\bf{\bar V}}_{{\rm{RF}}}^H\\
{\bf{\Sigma }}_{{\rm{RF}}}^{[r:]}{\bf{V}}_{{\rm{RF}}}^H{\bf{A}} &= {\bf{\Psi \bar \Sigma }}_{{\rm{RF}}}^{[r:]}{\bf{\bar V}}_{{\rm{RF}}}^H.
  \end{split}
\end{equation}
Note that ${\bf{P}} = {\bf{\Sigma }}_{{\rm{RF}}}^{[r:]}{\bf{V}}_{{\rm{RF}}}^H{\bf{A}}$ has full row rank and satisfies ${\bf{P}}{{\bf{P}}^\dag } = {\bf{I}}$, we adjust the baseband precoding matrix as ${{\bf{\bar F}}_{{\rm{BB}}}} = {{\bf{P}}^\dag }{\bf{\Sigma }}_{{\rm{RF}}}^{[r:]}{\bf{V}}_{{\rm{RF}}}^H{{\bf{F}}_{{\rm{BB}}}}$. Thus, the adjusted hybrid precoding matrix satisfies
\begin{equation}
  \begin{split}
{{{\bf{\bar F}}}_{{\rm{RF}}}}{{{\bf{\bar F}}}_{{\rm{BB}}}} &= {{\bf{F}}_{{\rm{RF}}}}{\bf{A}}{{\bf{P}}^\dag }{\bf{\Sigma }}_{{\rm{RF}}}^{[r:]}{\bf{V}}_{{\rm{RF}}}^H{{\bf{F}}_{{\rm{BB}}}}\\
 &= {\bf{U}}_{{\rm{RF}}}^{[:r]}{\bf{\Sigma }}_{{\rm{RF}}}^{[r:]}{{\bf{V}}_{{\rm{RF}}}^H}{\bf{A}}{{\bf{P}}^\dag }{\bf{\Sigma }}_{{\rm{RF}}}^{[r:]}{\bf{V}}_{{\rm{RF}}}^H{{\bf{F}}_{{\rm{BB}}}}\\
 &= {\bf{U}}_{{\rm{RF}}}^{[:r]}{\bf{P}}{{\bf{P}}^\dag }{\bf{\Sigma }}_{{\rm{RF}}}^{[r:]}{\bf{V}}_{{\rm{RF}}}^H{{\bf{F}}_{{\rm{BB}}}}\\
 &= {{\bf{F}}_{{\rm{RF}}}}{{\bf{F}}_{{\rm{BB}}}},
  \end{split}
\end{equation}
which means no performance difference.

\section{Proof of the Theorem 1}
According to Proposition 1, the ideal case for the HAC-based FHP scheme is that ${{\bf{F}}_{{\rm{RF}}}}$ holds the same column space as ${{\bf{F}}_{{\rm{opt}}}}$, i.e., ${\bf{U}}_{{\rm{RF}}}^{[:r]} = {\bf{U}}_{{\rm{opt}}}^{[:r]}{\bf{\Theta }}$, where ${\bf{U}}_{{\rm{RF}}}$ and ${\bf{U}}_{{\rm{opt}}}$ denote the left-singular matrices of ${{\bf{F}}_{{\rm{RF}}}}$ and ${{\bf{F}}_{{\rm{opt}}}}$, $r = {\rm{r(}}{{\bf{F}}_{{\rm{RF}}}}{\rm{)}} = {\rm{r(}}{{\bf{F}}_{{\rm{opt}}}}) = {\rm{r(}}{{\bf{H}}_{{\rm{eq}}}})$, ${\bf{\Theta}} \in \mathbb{C}^{r \times r}$ is a unitary matrix.

With ${{\bf{F}}_{{\rm{BB}}}} = {\bf{F}}_{{\rm{RF}}}^\dag {{\bf{F}}_{{\rm{opt}}}}$ obtained by the LS refinement scheme, we have the following derivation
\begin{equation}
\begin{split}
{{\bf{F}}_{{\rm{RF}}}}{{\bf{F}}_{{\rm{BB}}}} &= {{\bf{F}}_{{\rm{RF}}}}{\bf{F}}_{{\rm{RF}}}^\dag {{\bf{F}}_{{\rm{opt}}}}\\
 &= {\bf{U}}_{{\rm{RF}}}^{[:r]}{({\bf{U}}_{{\rm{RF}}}^{[:r]})^H}{\bf{U}}_{{\rm{opt}}}^{[:r]}{\bf{\Sigma }}_{{\rm{opt}}}^{[r:]}{\bf{V}}_{{\rm{opt}}}^H\\
 &= {\bf{U}}_{{\rm{opt}}}^{[:r]}{\bf{\Theta}}{({\bf{U}}_{{\rm{opt}}}^{[:r]}{\bf{\Theta}})^H}{\bf{U}}_{{\rm{opt}}}^{[:r]}{\bf{\Sigma }}_{{\rm{opt}}}^{[r:]}{\bf{V}}_{{\rm{opt}}}^H\\
 &= {{\bf{F}}_{{\rm{opt}}}} = \sqrt {K/\left\| {{{\bf{H}}_{{\rm{eq}}}}} \right\|_F^2} {\bf{H}}_{{\rm{eq}}}^\dag.
\end{split}
\end{equation}
Hence, the overall received signal in (3) can be given by
\begin{equation}
{\bf{r}} = \sqrt {K/\left\| {{\bf{H}}_{{\rm{eq}}}^\dag } \right\|_F^2} {{\bf{H}}_{{\rm{eq}}}}{\bf{H}}_{{\rm{eq}}}^\dag {\bf{x}} + {\bf{n}},
\end{equation}
where ${\bf{r}} = {[{r_1},...,{r_K}]^T}$, ${\bf{n}} = {[{\bf{w}}_1^H{{\bf{n}}_1},...,{\bf{w}}_K^H{{\bf{n}}_K}]^T}$. Since the inter-user interference is completely eliminated in (47), the $\rm{SINR}$ of the $k$-th user is ${\rm{SIN}}{{\rm{R}}_k} = P/({\sigma ^2}\left\| {{\bf{H}}_{{\rm{eq}}}^\dag } \right\|_F^2)$. Thus, the system spectral efficiency is bounded by
\begin{equation}
R_{\rm{LS}} \le K{\log _2}(1 + \frac{P}{{{\sigma ^2}\left\| {{\bf{H}}_{{\rm{eq}}}^\dag } \right\|_F^2}}).
\end{equation}

With ${{\bf{F}}_{{\rm{BB}}}} = {\bf{H}}_{{\rm{BB}}}^\dag  = {({{\bf{H}}_{{\rm{eq}}}}{{\bf{F}}_{{\rm{RF}}}})^\dag }$ obtained by the ZF refinement scheme, the received signal can be given by
\begin{equation}
{\bf{r}} = \sqrt {K/\left\| {{{\bf{F}}_{{\rm{RF}}}}{\bf{H}}_{{\rm{BB}}}^\dag } \right\|_F^2} {{\bf{H}}_{{\rm{eq}}}}{{\bf{F}}_{{\rm{RF}}}}{\bf{H}}_{{\rm{BB}}}^\dag {\bf{x}} + {\bf{n}},
\end{equation}
which also eliminates the inter-user interference and further bounds the system spectral efficiency as follows
\begin{equation}
R_{\rm{ZF}} \le K{\log _2}(1 + \frac{P}{{{\sigma ^2}\left\| {{\bf{F}}_{{\rm{RF}}} }{{\bf{H}}_{{\rm{BB}}}^\dag } \right\|_F^2}}).
\end{equation}

Since the row space of ${\bf{H}}_{\rm{eq}}$ is consistent with the column space of ${\bf{F}}_{\rm{opt}}$, the standard basis vectors of the row space of ${\bf{H}}_{\rm{eq}}$ can be linear represented by the columns of ${\bf{F}}_{\rm{RF}}$ with ${\bf{T}} \in \mathbb{C}^{{N_{\rm{RF}}} \times r}$, i.e.,
\begin{equation}
\begin{array}{c}
{{\bf{F}}_{{\rm{RF}}}}{\bf{T}} = {{\bf{V}}_{\rm{eq}}},\\
{\bf{V}}_{\rm{eq}}^H{{\bf{F}}_{{\rm{RF}}}}{\bf{T}} = {{\bf{I}}_r},
\end{array}
\end{equation}
where ${{\bf{U}}_{\rm{eq}}}{{\bf{\Sigma }}_{\rm{eq}}}{\bf{V}}_{\rm{eq}}^H = {{\bf{H}}_{{\rm{eq}}}}$, and ${{\bf{\Sigma }}_{\rm{eq}}} \in \mathbb{C}^{r \times r}$ is a diagonal matrix with $r$ nonzero singular values of ${{\bf{H}}_{{\rm{eq}}}}$ on the diagonal. According to (51), since ${N_{\rm{RF}}} \ge r$, the matrix ${\bf{V}}_{\rm{eq}}^H{{\bf{F}}_{{\rm{RF}}}}$ has full row rank and satisfies ${\bf{V}}_{\rm{eq}}^H{{\bf{F}}_{{\rm{RF}}}}{({\bf{V}}_{\rm{eq}}^H{{\bf{F}}_{{\rm{RF}}}})^\dag } = {\bf{I}}$.

Further, we derive as follows
\begin{equation}
  \begin{split}
  \left\| {{\bf{H}}_{{\rm{eq}}}^\dag } \right\|_F^2 &= \left\| {{\bf{\Sigma }}_{\rm{eq}}^{ - 1}{\bf{U}}_{\rm{eq}}^H} \right\|_F^2\\
&= \left\| {{\bf{V}}_{\rm{eq}}^H{{\bf{F}}_{{\rm{RF}}}}{{({\bf{V}}_{\rm{eq}}^H{{\bf{F}}_{{\rm{RF}}}})}^\dag }{\bf{\Sigma }}_{\rm{eq}}^{ - 1}{\bf{U}}_{\rm{eq}}^H} \right\|_F^2\\
&\mathop  \le \limits^{(a)} \left\| {{{\bf{F}}_{{\rm{RF}}}}{{({\bf{V}}_{\rm{eq}}^H{{\bf{F}}_{{\rm{RF}}}})}^\dag }{\bf{\Sigma }}_{\rm{eq}}^{ - 1}{\bf{U}}_{\rm{eq}}^H} \right\|_F^2\\
&\mathop  = \limits^{(b)} \left\| {{{\bf{F}}_{{\rm{RF}}}}{{({{\bf{H}}_{{\rm{eq}}}}{{\bf{F}}_{{\rm{RF}}}})}^\dag }} \right\|_F^2 = \left\| {{{\bf{F}}_{{\rm{RF}}}}{{\bf{H}}^\dag_{{\rm{BB}}}} } \right\|_F^2,
  \end{split}
\end{equation}
where (a) comes to equality if and only if ${{\bf{V}}_{\rm{eq}}}$ fits the left-singular matrix of ${{\bf{F}}_{\rm{RF}}}{{\bf{H}}^\dag_{{\rm{BB}}}}$, and (b) results from the product property of pseudo inverse. According to (52), $R_{\rm{ZF}} \leq R_{\rm{LS}}$ is always satisfied.

\ifCLASSOPTIONcaptionsoff
  \newpage
\fi

{
\small
\bibliographystyle{ieeetr}
\bibliography{reference}
}









\end{document}